\newtheorem{prop}{Proposition}
\newtheorem{lemma}{Lemma}
\newtheorem{corollary}{Corollary}
\begin{document}
\renewcommand{\thefootnote}{\fnsymbol{footnote}}
\title{On Backus average for generally anisotropic layers}
\author{
Len Bos\footnote{
Dipartimento di Informatica, Universit\`a di
Verona, {\tt leonardpeter.bos@univr.it}}\,,
David R. Dalton\footnote{
Department of Earth Sciences, Memorial University of Newfoundland,
{\tt dalton.nfld@gmail.com}}\,, 
Michael A. Slawinski\footnote{
Department of Earth Sciences, Memorial University of Newfoundland,
{\tt mslawins@mac.com}}\,,
Theodore Stanoev\footnote{
Department of Earth Sciences, Memorial University of Newfoundland,
{\tt theodore.stanoev@gmail.com}}
}
\date{June 22, 2016}
\maketitle
\renewcommand{\thefootnote}{\arabic{footnote}}
\setcounter{footnote}{0}
\section*{Abstract}
In this paper, following the Backus (1962) approach, we examine expressions for elasticity parameters of a homogeneous
generally anisotropic medium that is long-wave-equivalent to a stack of
thin generally  anisotropic  layers.
These expressions reduce to the results of Backus (1962) for the case of isotropic and transversely isotropic layers.

In the over half-a-century since the publications of Backus (1962) there have been numerous publications applying and extending that formulation.
However, neither George Backus nor the authors of the present paper are aware of further examinations of the mathematical underpinnings of the original formulation; hence this paper.

We prove that---within the long-wave approximation---if the thin layers obey stability conditions then so does 
the equivalent medium. 
We examine---within the Backus-average context---the approximation of the average of a product as the product of averages, which underlies the averaging process.

In the presented examination we use the expression of Hooke's law as a tensor equation; in other words, we use Kelvin's---as opposed to Voigt's---notation.
In general, the tensorial notation allows us to conveniently examine effects due to rotations of coordinate systems.
\section{Introduction and historical background}
The study of properties of materials as a function of scale has occupied researchers for decades.
Notably, the discipline of continuum mechanics originates, at least partially,  from such a consideration.
Herein, we focus our attention on the effect of a series of thin and laterally homogeneous layers on a long-wavelength wave.
These layers are composed of generally anisotropic Hookean solids.

Such a mathematical formulation serves as a quantitative analogy for phenomena examined in seismology.
The effect of seismic disturbances---whose wavelength is much greater than the size of encountered inhomogeneities---is tantamount to the smearing of the mechanical properties of such inhomogeneities.
The mathematical analogy of this smearing is expressed as averaging.
The result of this averaging is a homogeneous anisotropic medium to which we refer as an {\sl equivalent medium}.

We refer to the process of averaging as {\sl Backus averaging}, which is a common nomenclature in seismology.
However, several other researchers have contributed to the development of this method.

Backus (1962) built on the work of Rudzki (1911), Riznichenko (1949), Thomson (1950), Haskell (1953), White and Angona (1955), Postma (1955), Rytov (1956), Helbig (1958) and Anderson (1961) to show that a homogeneous transversely isotropic medium with a vertical symmetry axis could be a long-wave equivalent to a stack of thin isotropic or transversely isotropic layers.
In other words, the Backus average of thin layers appears---at the scale of a long wavelength---as a homogeneous transversely isotropic medium.

In this paper, we discuss the mathematical underpinnings of the Backus (1962) formulation.
To do so, we consider a homogeneous generally anisotropic medium that is a long-wave equivalent to a stack of thin generally anisotropic layers.
The cases discussed explicitly by Backus (1962) are special cases of this general formulation.
\section{Averaging Method}
\subsection{Assumptions}
\label{sub:Ass}
We assume the lateral homogeneity of Hookean solids consisting of a series of layers that are parallel to the $x_1x_2$-plane and have an infinite lateral extent.
We subject this series to the same traction above and below, independent of time or lateral position.
It follows that the stress tensor components~$\sigma_{i3}$\,, where $i\in\{1,2,3\}$\,, are constant throughout the strained medium, due to the requirement of equality of traction across interfaces (e.g., Slawinski (2015), pp.~430--432), and to the definition of the stress tensor,
\begin{equation*}
T_i=\sum\limits_{j=1}^3\sigma_{ij}n_j\,,
\qquad i\in\{1,2,3\}
\,,
\end{equation*}
where $T$ is traction and $n$ is the unit normal to the interface.
No such equality is imposed on the other three components of this symmetric tensor; $\sigma_{11}$\,, $\sigma_{12}$ and $\sigma_{22}$ can vary wildly along the $x_3$-axis due to changes of elastic properties from layer to layer.

Furthermore, regarding the strain tensor, we invoke the kinematic boundary conditions that require no slippage or separation between layers; in other words, the corresponding components of the displacement vector,~$u_1$\,, $u_2$ and $u_3$\,, must be equal to one another across the interface (e.g., Slawinski (2015), pp.~429--430).

These conditions are satisfied  if $u$ is continuous.
Furthermore, for parallel layers, its derivatives with respect to~$x_1$ and $x_2$\,, evaluated along the $x_3$-axis, remain small.
However, its derivatives with respect to~$x_3$\,, evaluated along that axis, can vary wildly.

The reason for the differing behaviour of the derivatives resides within Hooke's law,
\begin{equation}
\label{eq:Hooke}
\sigma_{ij}=\sum\limits_{k=1}^3\sum_{\ell=1}^3c_{ijk\ell}\varepsilon_{k\ell}
\,,\qquad
i,j\in\{1,2,3\}
\,,
\end{equation}
where
\begin{equation}
\label{eq:strain}
\varepsilon_{k\ell}:=\frac{1}{2}\left(\frac{\partial u_k}{\partial x_\ell}+\frac{\partial u_\ell}{\partial x_k}\right),
\qquad 
k,\ell\in\{1,2,3\}
\,.
\end{equation}
Within each layer, derivatives are linear functions of the stress tensor.
The derivatives with respect to~$x_1$ and $x_2$ remain within a given layer; hence, the linear relation remains constant.
The derivatives with respect to~$x_3$ exhibit changes due to different properties of the layers.

In view of definition~(\ref{eq:strain}), $\varepsilon_{11}$\,, $\varepsilon_{12}$ and $\varepsilon_{22}$ vary slowly along  the $x_3$-axis.
On the other hand, $\varepsilon_{13}$\,, $\varepsilon_{23}$ and $\varepsilon_{33}$ can vary wildly along that axis.

Herein, we assume that the elasticity parameters are expressed with respect to the same coordinate system for all layers.
However, this {\it a priori\/} assumption can be readily removed by rotating, if necessary, the coordinate systems to express them in the same orientation.
\subsection{Definitions}
Following the definition proposed by Backus (1962), the average of the function $f(x_3)$ of ``width''~$\ell'$ is the moving average given by
\begin{equation}
\label{eq:BackusOne}
\overline f(x_3):=\int\limits_{-\infty}^\infty w(\zeta-x_3)f(\zeta)\,{\rm d}\zeta
\,,
\end{equation}
where the weight function,~$w(x_3)$\,, is an  {\sl approximate identity}, which is an approximate Dirac delta that acts like the delta centred at $x_3=0$\,, with the following properties:
\begin{equation*}
w(x_3)\geqslant0\,,\!\!
\quad w(\pm\infty)=0\,,\!\!
\quad
\int\limits_{-\infty}^\infty w(x_3)\,{\rm d}x_3=1\,,\!\!
\quad
\int\limits_{-\infty}^\infty x_3w(x_3)\,{\rm d}x_3=0\,,\!\!
\quad
\int\limits_{-\infty}^\infty x_3^2w(x_3)\,{\rm d}x_3=(\ell')^2\,.
\end{equation*}
These properties define $w(x_3)$ as a probability-density function with mean~$0$ and standard deviation~$\ell'$\,, explaining the use of the term ``width'' for $\ell'$\,.

To understand the effect of such averaging, which is tantamount to smoothing by a wave, we may consider its effect on the pure frequency, $f(x_3)=\exp(- i \omega x_3)$\,,
\begin{equation*}
\overline f(x_3)=\int\limits_{-\infty}^\infty w(\zeta-x_3)f(\zeta)\,{\rm d}\zeta
=\int\limits_{-\infty}^\infty  w(\zeta-x_3) \exp(-\iota\omega \zeta)\,{\rm d}\zeta
=\int\limits_{-\infty}^\infty w(u)\exp(-\iota\omega(u+x_3))\,{\rm d}u
\,,
\end{equation*}
where $u:=\zeta-x_3$ and $\iota:=\sqrt{-1}$\,; it follows that
\begin{equation*}
\overline f(x_3)
=\exp(-\iota\omega x_3)\int\limits_{-\infty}^\infty w(u)\exp(-\iota\omega u)\,{\rm d}u
=\exp(-\iota\omega x_3)\widehat{w}(\omega)
\,,
\end{equation*}
where $\widehat{w}(\omega)$ is the Fourier transform of $w(x_3)$\,.

If, in addition, $w(x_3)$ is an even function, then  $\widehat{w}(\omega)$ is real-valued and we may think of $\overline f(x_3)$ as the pure frequency, $\exp(-\iota\omega x_3)$\,, whose ``amplitude'' is $\widehat{w}(\omega)$\,.
The classical Riemann-Lebesgue Lemma implies that this amplitude tends to zero as the frequency goes to infinity.
To examine this decay of amplitude, we may consider a common choice for~$w(x_3)$\,, namely,  the Gaussian density,
\[w(x_3)=\frac{1}{\ell'\sqrt{2\pi}}\exp\left(-\frac{x_3^2}{2(\ell')^2}\right)\,.\]
As is well known, in this case,
\[\widehat{w}(\omega)
=\exp\left(-\frac{(\omega\,\ell')^2}{2}\right)
\,,\]
which is a multiple of the Gaussian density with standard deviation $1/\ell'$\,.
In particular, one notes the fast decay, as the product $\omega\,\ell'$ increases.

Perhaps it is useful to look at another example.
Consider
\[w(x_3)=\frac{1}{2\sqrt{3}\,\ell'}I_{[-\sqrt{3}\,\ell',\sqrt{3}\,\ell']}\,,\]
which is the uniform density on the interval $[-\sqrt{3}\,\ell',\sqrt{3}\,\ell']$\,, and which satisfies the defining properties of $w(x_3)$\,, as required.
Its Fourier transform is
\[\widehat{w}(\omega)=\frac{\sin(\sqrt{3}\omega\ell')}{\sqrt{3}\omega\ell'}\,,\]
and, as expected, this amplitude tends to zero as $\omega\rightarrow\pm\infty$\,, but at a much slower rate than in the Gaussian case; herein, the decay rate is order $1/(\omega\ell')$\,.

\subsection{Properties}
To perform the averaging, we use its linearity, according to which the average of a sum is the sum of the averages, $\overline{f+g}=\overline{f}+\overline{g}$\,.
Also, we use the following lemma
\begin{lemma}
\label{lem:LemDer}
The average of the derivative is the derivative of the average,
\begin{equation*}
\overline{\frac{\partial f}{\partial x_i}}=\frac{\partial}{\partial x_i}\overline f\,,\qquad
i\in\{1,2,3\}
\,.
\end{equation*}
\end{lemma}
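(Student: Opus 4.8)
The plan is to reduce the statement to a single application of differentiation under the integral sign, after first removing the $x_3$-dependence from the weight by a change of variables. Starting from the defining formula~(\ref{eq:BackusOne}), in which the dependence of $f$ on $x_1$ and $x_2$ has been suppressed, I would substitute $u:=\zeta-x_3$, exactly as was done earlier in the pure-frequency computation, to rewrite the average as
\begin{equation*}
\overline f(x_1,x_2,x_3)=\int\limits_{-\infty}^\infty w(u)\,f(x_1,x_2,x_3+u)\,{\rm d}u\,.
\end{equation*}
The advantage of this form is that the whole dependence on $x_3$---and, trivially, on $x_1$ and $x_2$---now sits inside the argument of $f$ rather than inside the weight, which lets all three coordinate directions be handled by one and the same argument.

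Next I would differentiate under the integral sign. Formally, for each $i\in\{1,2,3\}$,
\begin{equation*}
\frac{\partial}{\partial x_i}\overline f
=\int\limits_{-\infty}^\infty w(u)\,\frac{\partial}{\partial x_i}f(x_1,x_2,x_3+u)\,{\rm d}u
=\int\limits_{-\infty}^\infty w(u)\,\bigl(\partial_i f\bigr)(x_1,x_2,x_3+u)\,{\rm d}u\,,
\end{equation*}
where the second equality is the chain rule, using that the shift $x_3\mapsto x_3+u$ has unit derivative. Undoing the substitution $u=\zeta-x_3$ in the last integral returns precisely $\overline{\partial f/\partial x_i}$, which is the claim.

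The step that actually requires justification---and hence the main obstacle---is the interchange of $\partial/\partial x_i$ with the integral. This is not automatic on the whole line: it needs a hypothesis guaranteeing that the $x_i$-derivative of the integrand is dominated, locally uniformly in $x$, by a fixed integrable function, so that a dominated-convergence (Leibniz-rule) argument applies. I would therefore assume $f\in C^1$ with $f$ and its first derivatives bounded---or, more weakly, growing slowly enough that $w\,\partial_i f$ remains integrable, which is automatic for the Gaussian weight given its rapid decay. Under such an assumption the interchange is a standard consequence of the mean-value theorem together with the dominated convergence theorem.

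Finally, I would note an alternative route for the delicate direction $i=3$ that avoids differentiating a shifted argument: integrating by parts in the original variable $\zeta$ and using $w(\pm\infty)=0$ to kill the boundary term gives $\overline{\partial f/\partial x_3}=-\int w'(\zeta-x_3)\,f\,{\rm d}\zeta$, while $\partial_{x_3}w(\zeta-x_3)=-w'(\zeta-x_3)$ shows the right-hand side equals $\partial_{x_3}\overline f$. This second argument trades the regularity demand on $f$ for a decay demand on $w$, and it makes transparent why the vanishing of $w$ at infinity is the property doing the work.
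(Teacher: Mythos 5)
Your proof is correct, but its main line of argument is genuinely different from the paper's. The paper treats the coordinates asymmetrically: for $i=1,2$ the derivative passes under the integral trivially, since the weight $w(\xi-x_3)$ does not depend on $x_1,x_2$; for $i=3$ it equates the two sides of the claim by computing each independently---integration by parts on $\overline{\partial f/\partial x_3}$, with $w(\pm\infty)=0$ killing the boundary term, and the chain rule $\partial_{x_3}w(\xi-x_3)=-w'(\xi-x_3)$ on $\partial_{x_3}\overline f$---so that both sides equal $-\int_{-\infty}^{\infty}w'(\xi-x_3)\,f(x_1,x_2,\xi)\,{\rm d}\xi$. You instead substitute $u=\zeta-x_3$ to move the entire $x$-dependence into the argument of $f$, after which one differentiation under the integral sign handles all three directions uniformly; your closing remark (integration by parts plus $w(\pm\infty)=0$) is, in fact, precisely the paper's argument for the $i=3$ case. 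The trade-offs are as you say: your main route is uniform in $i$ and explicit about the analytic hypotheses (a domination condition so the Leibniz rule applies), and it never requires $w$ to be differentiable, but it places the regularity burden on the unknown field $f$; the paper's route shifts that burden onto the weight---it needs $w'$ and the decay of $w$ at infinity---which is arguably the more natural place, since $w$ is chosen by the modeller, not given by the physics. Note also that the paper itself silently interchanges $\partial_{x_3}$ with the integral on the left-hand side (``since only $w$ is a function of $x_3$''), a step that would need exactly the kind of justification you spell out; in that sense your version is the more rigorous rendering of the same circle of ideas, while the paper's makes more transparent that $w(\pm\infty)=0$ is the structural property doing the work.
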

\noindent This lemma is proved in \ref{AppLemDer}.
In \ref{AppLemStab}, we prove the lemma that ensures that the average of Hookean solids results in a Hookean solid, which is
\begin{lemma}
\label{lem:LemStab}
If the individual layers satisfy the stability condition, so does their equivalent medium.
\end{lemma}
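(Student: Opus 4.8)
The plan is to recast the stability condition as the positive-definiteness of the $6\times6$ Kelvin-notation elasticity matrix and to track what the Backus procedure does to this matrix. In Kelvin notation the map $\varepsilon\mapsto\sigma$ is represented by a symmetric matrix $C$ acting on the Euclidean space of symmetric strain tensors with the Frobenius inner product, so the stability (strain-energy) condition is \emph{exactly} the statement that $C$ is positive definite; this is where Kelvin's rather than Voigt's normalisation matters, since it keeps the representation an isometry and lets quadratic forms and congruences transfer faithfully. I would order the six components so that the ``tangential'' triple $(11,22,12)$ comes first and the ``normal'' triple $(33,13,23)$ second, writing
\[
C=\begin{pmatrix}C_{tt}&C_{tn}\\ C_{nt}&C_{nn}\end{pmatrix},\qquad C_{nt}=C_{tn}^{\mathrm T}.
\]
Recall from Section~\ref{sub:Ass} that the continuous (slowly varying) quantities are the tangential strains $\varepsilon_t$ and the normal stresses $\sigma_n$, whereas $\sigma_t$ and $\varepsilon_n$ vary wildly.

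The first substantive step is the \emph{partial inversion} of Hooke's law that isolates the continuous quantities. Since $C$ is positive definite, so is its principal submatrix $C_{nn}$, which is therefore invertible; solving the normal block $\sigma_n=C_{nt}\varepsilon_t+C_{nn}\varepsilon_n$ for $\varepsilon_n$ and substituting into the tangential block gives
\[
\begin{pmatrix}\sigma_t\\ \varepsilon_n\end{pmatrix}=M\begin{pmatrix}\varepsilon_t\\ \sigma_n\end{pmatrix},\qquad M=\begin{pmatrix}A&B\\ -B^{\mathrm T}&D\end{pmatrix},
\]
where $A:=C_{tt}-C_{tn}C_{nn}^{-1}C_{nt}$ is the Schur complement of $C_{nn}$, $D:=C_{nn}^{-1}$, and $B:=C_{tn}C_{nn}^{-1}$. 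Two facts are the crux: (i) the diagonal blocks $A$ and $D$ are symmetric positive definite ($D$ as the inverse of a positive-definite matrix, $A$ because the Schur complement of a positive-definite block inside a positive-definite matrix is positive definite); and (ii) the off-diagonal blocks are negative transposes of one another, $M_{21}=-M_{12}^{\mathrm T}$, which follows from $C_{nt}=C_{tn}^{\mathrm T}$.

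Next I would carry out the averaging on this mixed relation and invert back. Because $\varepsilon_t$ and $\sigma_n$ are slowly varying, linearity of the average together with the approximation of the average of a product by the product of averages turns the pointwise relation $\sigma_t=A\varepsilon_t+B\sigma_n$ (and its normal counterpart) into $\overline{\sigma}_t=\overline A\,\overline{\varepsilon}_t+\overline B\,\overline{\sigma}_n$, so the equivalent medium is governed by the entrywise average $\overline M$. Since $\overline{(\cdot)}$ integrates against the probability density $w$, it is an average with respect to a positive measure of unit mass; such an average of positive-definite matrices is again positive definite (for $v\neq0$, $v^{\mathrm T}\overline A\,v=\int w\,(v^{\mathrm T}Av)>0$), and it preserves the antisymmetry $\overline M_{21}=-\overline M_{12}^{\mathrm T}$ by linearity, so $\overline M$ has exactly the structural form of each $M$. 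Undoing the partial inversion then yields the equivalent elasticity matrix
\[
\widetilde C=\begin{pmatrix}\overline A+\overline B\,\overline D^{-1}\overline B^{\mathrm T}&\overline B\,\overline D^{-1}\\ \overline D^{-1}\overline B^{\mathrm T}&\overline D^{-1}\end{pmatrix}=L\begin{pmatrix}\overline A&0\\ 0&\overline D^{-1}\end{pmatrix}L^{\mathrm T},\qquad L=\begin{pmatrix}I&\overline B\\ 0&I\end{pmatrix},
\]
which is symmetric and exhibited as a congruence of the positive-definite block-diagonal matrix $\mathrm{diag}(\overline A,\overline D^{-1})$ by the unit-determinant invertible $L$; by Sylvester's law of inertia $\widetilde C$ is positive definite, i.e.\ the equivalent medium is stable.

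I expect the main obstacle to be structural bookkeeping rather than deep theory: one must verify carefully that the partial inversion really produces the antisymmetric coupling $(B,-B^{\mathrm T})$, since it is this fact---not merely positivity of the diagonal blocks---that makes the family of admissible $M$ a convex set closed under Backus averaging, and one must confirm that the averaging step legitimately acts entrywise on $M$, which rests on the slow variation of the continuous fields and hence on the product-of-averages approximation flagged in the abstract. The remaining analytic points---invertibility of $C_{nn}$ and of $\overline D$ and positivity of the Schur complement---are standard consequences of positive-definiteness of principal submatrices.
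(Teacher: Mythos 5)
Your proof is correct, but it follows a genuinely different route from the paper's own proof in \ref{AppLemStab}. The paper argues via energy: stability of each layer is written as positivity of the work density, $W=\tfrac{1}{2}\,\sigma\cdot\varepsilon>0$; averaging gives $\overline W=\tfrac{1}{2}\,\overline{\sigma\cdot\varepsilon}>0$ exactly; Lemma~\ref{lem:LemProd} is then invoked to replace $\overline{\sigma\cdot\varepsilon}$ by $\overline{\sigma}\cdot\overline{\varepsilon}$, and since $\overline{\sigma}=\langle C\,\rangle\,\overline{\varepsilon}$ defines the equivalent medium, positive definiteness of $\langle C\,\rangle$ is read off from the quadratic form. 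You instead argue algebraically, in the style of Schoenberg and Muir: partial inversion of Hooke's law producing the Schur complement $A$ and the block structure $\bigl(\begin{smallmatrix}A&B\\-B^{\mathrm T}&D\end{smallmatrix}\bigr)$, the observation that entrywise averaging against the probability density $w$ preserves both positive definiteness of the diagonal blocks and the skew coupling, and reconstruction of the equivalent tensor as the congruence $L\,\mathrm{diag}(\overline A,\overline D^{\,-1})\,L^{\mathrm T}$, whence Sylvester's law of inertia finishes. Your bookkeeping checks out: $M_{21}=-M_{12}^{\mathrm T}$ does follow from the symmetry of $C_{nn}$, and your $\widetilde C$ coincides---after permuting the Kelvin indices back---with the paper's $\langle C\,\rangle$ as given by equations~(\ref{eq:G}) and~(\ref{eq:H}), so you are proving positivity of the right object. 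The essential difference is \emph{where} the approximation of Lemma~\ref{lem:LemProd} enters. The paper uses it inside the inequality itself, so strict positivity is inherited only to within the approximation error (hence the qualifier ``within the long-wave approximation''); you use it only to identify the equivalent medium with $\overline M$, which is precisely the definition of the Backus average, after which your positivity conclusion is exact linear algebra with no residual error term. That is a real advantage: your argument shows that the tensor defined by the Backus formul{\ae} is positive definite whenever the layer tensors are, regardless of how accurate the product-of-averages approximation happens to be, whereas the paper's energy argument is shorter and physically more transparent but ties the conclusion to the quality of that approximation.
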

\noindent The proof of this lemma invokes Lemma~\ref{lem:LemProd}, below.
\subsection{Approximations}
\label{sub:Approx}
In \ref{AppLemApp}, we state and prove a result that may be paraphrased as
\begin{lemma}
\label{lem:LemProd}
If $f(x_3)$ is nearly constant along $x_3$ and $g(x_3)$ does not vary excessively, then $\overline{fg}\approx\overline f\,\overline g$\,.
\end{lemma}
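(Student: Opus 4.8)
The plan is to recognize that the Backus average~(\ref{eq:BackusOne}) is, for each fixed $x_3$, nothing other than an expectation with respect to a probability measure. Indeed, the listed properties of $w$ make ${\rm d}\mu_{x_3}(\zeta):=w(\zeta-x_3)\,{\rm d}\zeta$ a probability measure, so that $\overline{h}(x_3)=\int h\,{\rm d}\mu_{x_3}$ is the mean of $h$ under $\mu_{x_3}$. Under this reading, a one-line computation shows that the quantity we wish to control is precisely a covariance,
\begin{equation*}
\overline{fg}(x_3)-\overline{f}(x_3)\,\overline{g}(x_3)=\int\bigl(f-\overline{f}(x_3)\bigr)\bigl(g-\overline{g}(x_3)\bigr)\,{\rm d}\mu_{x_3}=:{\rm Cov}_{x_3}(f,g)\,,
\end{equation*}
since the two cross terms each reduce to $\overline{f}(x_3)\,\overline{g}(x_3)$.

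With this identification, the heart of the argument is a single application of the Cauchy--Schwarz inequality in $L^2(\mu_{x_3})$ to the mean-zero functions $f-\overline{f}$ and $g-\overline{g}$,
\begin{equation*}
\bigl|\,\overline{fg}-\overline{f}\,\overline{g}\,\bigr|=\bigl|{\rm Cov}_{x_3}(f,g)\bigr|\leqslant\sqrt{{\rm Var}_{x_3}(f)}\,\sqrt{{\rm Var}_{x_3}(g)}\,,
\end{equation*}
where ${\rm Var}_{x_3}(h)=\overline{h^2}-\overline{h}^2\geqslant0$. This inequality carries essentially all the weight of the lemma, and it demands no smoothness of $f$ or $g$---only square-integrability against $w$.

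It remains to translate the two hypotheses into bounds on these variances. That $f$ is nearly constant along $x_3$ is exactly the statement that ${\rm Var}_{x_3}(f)$ is small; in the limiting case of a genuinely constant $f$ it vanishes and the approximation becomes an identity. That $g$ does not vary excessively is what keeps ${\rm Var}_{x_3}(g)$ bounded, so the right-hand side is controlled by the smallness of ${\rm Var}_{x_3}(f)$ alone. For additional insight into the size of the discrepancy I would also record the Taylor heuristic: using $\int u\,w(u)\,{\rm d}u=0$ and $\int u^2w(u)\,{\rm d}u=(\ell')^2$, one obtains $\overline{fg}-\overline{f}\,\overline{g}=(\ell')^2f'g'+\cdots$, whose leading term again vanishes precisely when $f$ is slowly varying.

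The main obstacle is not analytical but definitional: the statement is qualitative, so the real task is to pin down usable meanings for ``nearly constant'' and ``does not vary excessively.'' The covariance route is attractive precisely because it isolates the two quantities that must be bounded---namely ${\rm Var}_{x_3}(f)$ and ${\rm Var}_{x_3}(g)$---and reduces the entire lemma to Cauchy--Schwarz. I would therefore state the hypotheses directly as a small bound on ${\rm Var}_{x_3}(f)$ together with a uniform bound on ${\rm Var}_{x_3}(g)$, after which the conclusion is immediate.
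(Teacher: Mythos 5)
Your opening moves are exactly the paper's: the identity $\overline{fg}-\overline f\,\overline g=\int_{-\infty}^\infty\left(f(x)-\overline f\right)\left(g(x)-\overline g\right)W(x)\,{\rm d}x$ is precisely the paper's equation~(\ref{eq:Len}), and applying Cauchy--Schwarz to it to obtain $|\overline{fg}-\overline f\,\overline g|\leqslant\sqrt{{\rm Var}_{x_3}(f)}\,\sqrt{{\rm Var}_{x_3}(g)}$ is the very next step in the paper's proof of Proposition~\ref{prop:One}. So the machinery is the same; the divergence is in what is done with the two variances, and that is where your proposal has a genuine gap.

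The hypothesis ``$f$ is nearly constant along $x_3$'' is a statement about the function $f$ itself---its variation along the axis---not about the measure $\mu_{x_3}$. Reading it as ``${\rm Var}_{x_3}(f)$ is small'' and then declaring the conclusion immediate assumes exactly what must be proved; you acknowledge this yourself when you propose to restate the hypotheses as variance bounds. The substantive content that the paper supplies, and that your argument defers by definition, is the bound of each variance by intrinsic data of the function together with the averaging width: from $|f(x)-\overline f|\leqslant\|f'\|_\infty\int_{-\infty}^\infty|x-y|\,W(y)\,{\rm d}y$ (your inequality~(\ref{eq:Len2}) in the paper's numbering), a second application of Cauchy--Schwarz, and the moment identities $\int_{-\infty}^\infty x\,W(x)\,{\rm d}x=x_3$ and $\int_{-\infty}^\infty x^2\,W(x)\,{\rm d}x=(\ell')^2+x_3^2$, one gets $\int_{-\infty}^\infty|f(x)-\overline f|^2\,W(x)\,{\rm d}x\leqslant2(\ell')^2\|f'\|_\infty^2$ uniformly in $x_3$, and hence the explicit estimate $|\overline{fg}-\overline f\,\overline g|\leqslant2(\ell')^2\|f'\|_\infty\|g'\|_\infty$ of Proposition~\ref{prop:One}. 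This is where the width $\ell'$---the physically meaningful long-wave parameter---enters the error, and it is what gives ``nearly constant'' a checkable meaning, namely $\|f'\|_\infty$ small. Your Taylor remark $(\ell')^2f'g'$ points at precisely this result but is a heuristic, not a bound. Note also that the paper proves a complementary estimate requiring no derivatives at all (Proposition~\ref{prop:Two}): $|\overline{fg}-\overline f\,\overline g|\leqslant\left(\sup|g|\right)(M-m)$ with $m=\inf f$ and $M=\sup f$, which covers wildly varying but bounded $g$. To close your argument, keep your covariance-plus-Cauchy--Schwarz skeleton and add the variance bounds; that is the missing half of the proof.
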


An approximation---within the physical realm---is our applying the static-case properties to examine wave propagation, which is a dynamic process.
As stated in Section~\ref{sub:Ass}, in the case of static equilibrium, $\sigma_{i3}$\,, where $i\in\{1,2,3\}$\,, are constant.
We consider that these stress-tensor components remain nearly constant along the $x_3$-axis, for the farfield and long-wavelength phenomena.
As suggested by Backus (1962), the concept of a long wavelength can be quantified as $\kappa\,\ell'\ll 1$\,, where $\kappa$ is the wave number.
Similarly, we consider that $\varepsilon_{11}$\,, $\varepsilon_{12}$ and $\varepsilon_{22}$ remain slowly varying along  that axis.

Also, we assume that waves propagate perpendicularly, or nearly so, to the interfaces. Otherwise, due to inhomogeneity between layers, the proportion of distance travelled in each layer is a function of the source-receiver offset, which---in principle---entails that averaging requires different weights for each layer depending on the offset (Dalton and Slawinski, 2016).
\section{Equivalent-medium elasticity parameters}
Consider the constitutive equation for a generally anisotropic Hookean solid,
\begin{equation}
\left[\begin{array}{c}
\sigma_{11}\\
\sigma_{22}\\
\sigma_{33}\\
\sqrt{2}\sigma_{23}\\
\sqrt{2}\sigma_{13}\\
\sqrt{2}\sigma_{12}\end{array}\right]=\left[\begin{array}{cccccc}
c_{1111} & c_{1122} & c_{1133} & \sqrt{2}c_{1123} & \sqrt{2}c_{1113} & \sqrt{2}c_{1112}\\
c_{1122} & c_{2222} & c_{2233} & \sqrt{2}c_{2223} & \sqrt{2}c_{2213} & \sqrt{2}c_{2212}\\
c_{1133} & c_{2233} & c_{3333} & \sqrt{2}c_{3323} & \sqrt{2}c_{3313} & \sqrt{2}c_{3312}\\
\sqrt{2}c_{1123} & \sqrt{2}c_{2223} & \sqrt{2}c_{3323} & 2c_{2323} & 2c_{2313} & 2c_{2312}\\
\sqrt{2}c_{1113} & \sqrt{2}c_{2213} & \sqrt{2}c_{3313} & 2c_{2313} & 2c_{1313} & 2c_{1312}\\
\sqrt{2}c_{1112} & \sqrt{2}c_{2212} & \sqrt{2}c_{3312} & 2c_{2312} & 2c_{1312} & 2c_{1212}\end{array}\right]\left[\begin{array}{c}
\varepsilon_{11}\\
\varepsilon_{22}\\
\varepsilon_{33}\\
\sqrt{2}\varepsilon_{23}\\
\sqrt{2}\varepsilon_{13}\\
\sqrt{2}\varepsilon_{12}\end{array}
\right]
\,,
\label{eq:Chapman}
\end{equation}
where the elasticity tensor, whose components constitute the $6\times6$ matrix,~$C$\,, is positive-definite.
This expression is equivalent to the canonical form of Hooke's law stated in expression~(\ref{eq:Hooke}).
In expression~(\ref{eq:Chapman}), the elasticity tensor,~$c_{ijk\ell}$\,, which in its canonical form is a fourth-rank tensor in three dimensions, is expressed as a second-rank tensor in six dimensions, and equations~(\ref{eq:Chapman}) constitute  tensor equations (e.g., Chapman (2004, Section~4.4.2) and Slawinski (2015, Section~5.2.5)). This formulation is referred to as Kelvin's notation.
A common notation, known as Voigt's notation, does not constitute a tensor equation.

To apply the averaging process for a stack of generally anisotropic layers, we express equations~(\ref{eq:Chapman}) in such a manner that the left-hand sides of each equation consist of rapidly varying stresses or strains and the right-hand sides consist of algebraic combinations of rapidly varying layer-elasticity parameters multiplied by slowly varying stresses or strains.

First, consider the equations for $\sigma_{33}$, $\sigma_{23}$ and $\sigma_{13}$, which
can be written as
\begin{align*}
\sigma_{33}=&\,c_{1133}\varepsilon_{11}+c_{2233}\varepsilon_{22}+c_{3333}
\varepsilon_{33}+\sqrt{2}c_{3323}\sqrt{2}\varepsilon_{23}+\sqrt{2}c_{3313}\sqrt{2}\varepsilon_{13}
+\sqrt{2}c_{3312}\sqrt{2}\varepsilon_{12}\\[8pt]
\sqrt{2}\sigma_{23}=&\,\sqrt{2}c_{1123}\varepsilon_{11}+\sqrt{2}c_{2223}\varepsilon_{22}+\sqrt{2}c_{3323}
\varepsilon_{33}+2c_{2323}\sqrt{2}\varepsilon_{23}+2c_{2313}\sqrt{2}\varepsilon_{13}
+2c_{2312}\sqrt{2}\varepsilon_{12}\\[8pt]
\sqrt{2}\sigma_{13}=&\,\sqrt{2}c_{1113}\varepsilon_{11}+\sqrt{2}c_{2213}\varepsilon_{22}+\sqrt{2}c_{3313}
\varepsilon_{33}+2c_{2313}\sqrt{2}\varepsilon_{23}+2c_{1313}\sqrt{2}\varepsilon_{13}
+2c_{1312}\sqrt{2}\varepsilon_{12}\,,
\end{align*}
which then can be written as the matrix equation,
{\small
\begin{align}
\nonumber\underbrace{
\begin{pmatrix}
c_{3333}&\sqrt{2}c_{3323}&\sqrt{2}c_{3313}\\[8pt]
\sqrt{2}c_{3323}&2c_{2323}&2c_{2313}\\[8pt]
\sqrt{2}c_{3313}&2c_{2313}&2c_{1313}
\end{pmatrix}}_M
\underbrace{
\begin{pmatrix}
\varepsilon_{33}\\[8pt]
\sqrt{2}\varepsilon_{23}\\[8pt]
\sqrt{2}\varepsilon_{13}
\end{pmatrix}}_E
=&
\underbrace{
\begin{pmatrix}
\sigma_{33}-c_{1133}\varepsilon_{11}-c_{2233}\varepsilon_{22}-\sqrt{2}c_{3312}\sqrt{2}\varepsilon_{12}\\[8pt]
\sqrt{2}\sigma_{23}-\sqrt{2}c_{1123}\varepsilon_{11}-\sqrt{2}c_{2223}\varepsilon_{22}-2c_{2312}\sqrt{2}\varepsilon_{12}\\[8pt]
\sqrt{2}\sigma_{13}-\sqrt{2}c_{1113}\varepsilon_{11}-\sqrt{2}c_{2213}\varepsilon_{22}-2c_{1312}\sqrt{2}\varepsilon_{12}
\end{pmatrix}}_A\\
=&
\underbrace{
\begin{pmatrix}
\sigma_{33}\\[8pt]
\sqrt{2}\sigma_{23}\\[8pt]
\sqrt{2}\sigma_{13}
\end{pmatrix}}_G
-
\underbrace{
\begin{pmatrix}
c_{1133}&c_{2233}&\sqrt{2}c_{3312}\\[8pt]
\sqrt{2}c_{1123}&\sqrt{2}c_{2223}&2c_{2312}\\[8pt]
\sqrt{2}c_{1113}&\sqrt{2}c_{2213}&2c_{1312}
\end{pmatrix}}_B
\underbrace{
\begin{pmatrix}
\varepsilon_{11}\\[8pt]
\varepsilon_{22}\\[8pt]
\sqrt{2}\varepsilon_{12}
\end{pmatrix}}_F
\,.
\label{eq:MEA}
\end{align}}
$M$ is invertible, since it is positive-definite and, hence, its determinant is strictly positive.
This  positive definiteness follows from  the positive definiteness of $C$\,, given in expression~(\ref{eq:Chapman}), for $x\in{\mathbb R}^3\backslash\{0\}$ and  $y:=[0,0,x^t,0]^t$\,, $x^tMx=y^tCy>0$ as $y\neq 0.$   This follows only if $C$ is in Kelvin notation, and allows us to conclude that---since the positive definiteness is the sole constraint on the values of elasticity parameters---the Backus average is allowed for any sequence of layers composed of Hookean solids.

Notably, determinants of $M$\,, in expression~(\ref{eq:MEA}), differ by a factor of four between Voigt's notation and Kelvin's notation, used herein.
The final expressions for the equivalent medium, however, appear to be the same for both notations.

Multiplying both sides of equation~(\ref{eq:MEA}) by $M^{-1}$\,, we express the rapidly varying $E$ as
\begin{equation}
\label{eq:E}
	E=M^{-1}A=M^{-1}(G-BF)=M^{-1}G-(M^{-1}B)F\,,
\end{equation}
which means that
\[
M^{-1}G=E+(M^{-1}B)F\,,
\]
and can be averaged to get
\[
\overline{M^{-1}}\,\overline G\approx\overline E+\overline{(M^{-1}B)}\,\overline F\,,
\]
and, hence, effectively,
\begin{equation}
\overline G=\overline{(M^{-1})}^{\,\,-1}\left[\overline E+\overline{(M^{-1}B)}\,\overline{F}
\right]=\overline{(M^{-1})}^{\,\,-1}\overline E+ \overline{(M^{-1})}^{\,\,-1}
 \overline{(M^{-1}B)}\,\overline{F}
 \,.
\label{eq:G}
\end{equation}
Comparing expression~(\ref{eq:G}) with the pattern of the corresponding three lines of $C$ in expression~(\ref{eq:Chapman}), we obtain formul{\ae} for the equivalent-medium elasticity parameters.

To obtain the remaining formul{\ae}, let us examine the equations for 
the rapidly varying $\sigma_{11}$, $\sigma_{22}$ and $\sigma_{12}$\,,
which, from equation~(\ref{eq:Chapman}), can be written as
{\small
\begin{equation}
\label{eq:HJF}
\underbrace{
\begin{pmatrix}
\sigma_{11}\\[8pt]
\sigma_{22}\\[8pt]
\sqrt{2}\sigma_{12}
\end{pmatrix}}_H
=
\underbrace{
\begin{pmatrix}
c_{1111}&c_{1122}&\sqrt{2}c_{1112}\\[8pt]
c_{1122}&c_{2222}&\sqrt{2}c_{2212}\\[8pt]
\sqrt{2}c_{1112}&\sqrt{2}c_{2212}&2c_{1212}
\end{pmatrix}}_J
\underbrace{
\begin{pmatrix}
\varepsilon_{11}\\[8pt]
\varepsilon_{22}\\[8pt]
\sqrt{2}\varepsilon_{12}
\end{pmatrix}}_F
+
\underbrace{
\begin{pmatrix}
c_{1133}&\sqrt{2}c_{1123}&\sqrt{2}c_{1113}\\[8pt]
c_{2233}&\sqrt{2}c_{2223}&\sqrt{2}c_{2213}\\[8pt]
\sqrt{2}c_{3312}&2c_{2312}&2c_{1312}
\end{pmatrix}}_K
\underbrace{
\begin{pmatrix}
\varepsilon_{33}\\[8pt]
\sqrt{2}\varepsilon_{23}\\[8pt]
\sqrt{2}\varepsilon_{13}
\end{pmatrix}}_E\,.
\end{equation}}
Note that $K=B^t$\,.
Substituting expression~(\ref{eq:E}) for $E$\,, we get
\begin{equation*}
H=JF+KM^{-1}(G-BF)=JF+KM^{-1}G-KM^{-1}BF\,.
\end{equation*}
Averaging, we get
\begin{align}
\overline H\approx\,&\overline J\,\overline F+\overline{KM^{-1}}\,\overline G-
\overline{KM^{-1}B}\,\overline F\nonumber\\
=&\,(\overline J-\overline{KM^{-1}B})\overline F+\overline{KM^{-1}}\left\{
\overline{(M^{-1})}^{\,\,-1}\left[\overline E+\overline{(M^{-1}B)}\,\overline{F}\right]\right\}\nonumber\\
=&
\left[\overline J-\overline{KM^{-1}B}+\overline{KM^{-1}}\,\overline{(M^{-1})}^{\,\,-1}\overline{(M^{-1}B)}
\right]\overline F+\overline{KM^{-1}}\,\overline{(M^{-1})}^{\,\,-1}\overline E
\,.
\label{eq:H}
\end{align}
Comparing equation~(\ref{eq:H}) with  the pattern of the corresponding three lines in equation~(\ref{eq:Chapman}), we obtain formul{\ae} for the remaining equivalent-medium parameters.

We do not list in detail the formul{\ae} for the twenty-one equivalent-medium elasticity parameters of a generally anisotropic solid,
since just one such parameter takes about half-a-dozen pages.
However, a symbolic-calculation software can be used to obtain those parameters.
In Section~\ref{sec:HighSym}, we use the monoclinic symmetry to exemplify the process and list in detail the resulting formul{\ae}, and we also summarize the results for orthotropic symmetry.

The results of this section are similar to the results of Schoenberg and Muir (1989), Helbig and Schoenberg (1987, Appendix),  Helbig (1998), Carcione et al.\  (2012) and Kumar (2013), except that the tensorial form of equation~(\ref{eq:Chapman}) requires factors of $2$ and $\sqrt 2$ in several entries of $M$\,, $B$\,, $J$ and $K$\,. 
This notation allows for a convenient study of rotations, which arise in the study of elasticity tensors expressed in coordinate systems of arbitrary orientations.
\section{Reduction to higher symmetries}
\label{sec:HighSym}
\subsection{Monoclinic symmetry}
\label{sec:mono}
Let us reduce the expressions derived for general anisotropy to higher material symmetries.
To do so, let us first consider the case of monoclinic layers.

The components of a monoclinic tensor can be written in a matrix form as
\begin{equation}
\label{eq:mono}
C^{\rm mono}=
\left(\begin{array}{cccccc}
c_{1111} & c_{1122} & c_{1133} & 0 & 0 &
\sqrt{2}c_{1112}\\
c_{1122} & c_{2222} & c_{2233} & 0 & 0 & \sqrt{2}c_{2212}\\
c_{1133} & c_{2233} & c_{3333} & 0 & 0 & \sqrt{2}c_{3312}\\
0 & 0 & 0 & 2c_{2323} & 2c_{2313} & 0\\
0 & 0 & 0 & 2c_{2313} & 2c_{1313} & 0\\
\sqrt{2}c_{1112} & \sqrt{2}c_{2212} & \sqrt{2}c_{3312} & 0 & 0 & 2c_{1212}
\end{array}\right)
\,;
\end{equation}
this expression corresponds to the coordinate system whose $x_3$-axis is normal to the symmetry plane.  
Inserting these components into expression~(\ref{eq:MEA}), we write
\begin{equation*}
M=
\begin{pmatrix}
c_{3333}&0&0\\[8pt]
0&2c_{2323}&2c_{2313}\\[8pt]
0&2c_{2313}&2c_{1313}
\end{pmatrix}
\,,\qquad
M^{-1}=
\begin{pmatrix}
\dfrac{1}{c_{3333}}&0&0\\[10pt]
0&\dfrac{c_{1313}}{D}&-\dfrac{c_{2313}}{D}\\[10pt]
0&-\dfrac{c_{2313}}{D}&\dfrac{c_{2323}}{D}
\end{pmatrix}\,,
\end{equation*}
where $D\equiv 2(c_{2323}c_{1313}-c_{2313}^2)$\,.
Then, we have
\begin{equation*}
\overline{M^{-1}}=
\begin{pmatrix}
\,\overline{\dfrac{1}{c_{3333}}}&0&0\\[10pt]
0&\overline{\dfrac{c_{1313}}{D}}&\overline{-\dfrac{c_{2313}}{D}}\\[10pt]
0&\overline{-\dfrac{c_{2313}}{D}}&\overline{\dfrac{c_{2323}}{D}}
\end{pmatrix}\,,
\qquad
\overline{(M^{-1})}^{\,\,-1}=
\begin{pmatrix}
\,\overline{\left(\dfrac{1}{c_{3333}}\right)}^{\,\,-1}&0&0\\[10pt]
0&\dfrac{\overline{\left(\dfrac{c_{2323}}{D}\right)}}{D_2}&\dfrac{\overline{\left(\dfrac{c_{2313}}{D}\right)}}{D_2}\\[10pt]
0&\dfrac{\overline{\left(\dfrac{c_{2313}}{D}\right)}}{D_2}&\dfrac{\overline{\left(\dfrac{c_{1313}}{D}\right)}}{D_2}
\end{pmatrix}\,,
\end{equation*}
where $D_2\equiv \overline{(c_{1313}/D)}\,\overline{(c_{2323}/D)}-\overline{(c_{2313}/D)}^2$\,.
We also have
\begin{equation*}
B=
\begin{pmatrix}
c_{1133}&c_{2233}&\sqrt{2}c_{3312}\\[8pt]
0&0&0\\[8pt]
0&0&0
\end{pmatrix}
\,,
\end{equation*}
which leads to
\begin{equation*}
M^{-1}B=
\begin{pmatrix}
\dfrac{c_{1133}}{c_{3333}}&\dfrac{c_{2233}}{c_{3333}}&\dfrac{\sqrt{2}c_{3312}}{c_{3333}}\\[16pt]
0&0&0\\[8pt]
0&0&0
\end{pmatrix}
\,,
\qquad
\overline{M^{-1}B}=
\begin{pmatrix}
\,\overline{\dfrac{c_{1133}}{c_{3333}}}&\overline{\dfrac{c_{2233}}{c_{3333}}}&
\overline{\dfrac{\sqrt{2}c_{3312}}{c_{3333}}}\,\\[16pt]
0&0&0\\[8pt]
0&0&0
\end{pmatrix}
\,.
\end{equation*}
Furthermore,
\begin{equation*}
\overline{(M^{-1})}^{\,\,-1}\overline{(M^{-1}B)}
=
\begin{pmatrix}
\,\overline{\left(\dfrac{1}{c_{3333}}\right)}^{\,\,-1}\,
\overline{\left(\dfrac{c_{1133}}{c_{3333}}\right)}&
\overline{\left(\dfrac{1}{c_{3333}}\right)}^{\,\,-1}\,
\overline{\left(\dfrac{c_{2233}}{c_{3333}}\right)}&
\overline{\left(\dfrac{1}{c_{3333}}\right)}^{\,\,-1}\,
\overline{\left(\dfrac{\sqrt{2}c_{3312}}{c_{3333}}\right)}\,
\\[16pt]
0&0&0\\[8pt]
0&0&0
\end{pmatrix}
\,.
\end{equation*}
Then, if we write equation~(\ref{eq:G}) as
\begin{equation*}
\begin{pmatrix}
\overline{\sigma_{33}}\\[6pt]
\overline{\sqrt{2}\sigma_{23}}\\[6pt]
\overline{\sqrt{2}\sigma_{13}}
\end{pmatrix}
=
\overline{(M^{-1})}^{\,\,-1}
\begin{pmatrix}
\overline{\varepsilon_{33}}\\[6pt]
\overline{\sqrt{2}\varepsilon_{23}}\\[6pt]
\overline{\sqrt{2}\varepsilon_{13}}
\end{pmatrix}
+
\overline{(M^{-1})}^{\,\,-1}\overline{(M^{-1}B)}
\begin{pmatrix}
\overline{\varepsilon_{11}}\\
\overline{\varepsilon_{22}}\\[4pt]
\overline{\sqrt{2}\varepsilon_{12}}
\end{pmatrix}
\end{equation*}
and compare it to equation~(\ref{eq:Chapman}), we obtain
\begin{equation*}
\langle c_{3333}\rangle=\overline{\left(\frac{1}{c_{3333}}\right)}^{\,\,-1}\,,
\qquad
\langle c_{2323}\rangle=\frac{\overline{\left(\dfrac{c_{2323}}{D}\right)}}{2D_2}\,,
\end{equation*} 
\begin{equation*}
\langle c_{1313}\rangle=\frac{\overline{\left(\dfrac{c_{1313}}{D}\right)}}{2D_2}\,,
\qquad
\langle c_{2313}\rangle=\frac{\overline{\left(\dfrac{c_{2313}}{D}\right)}}{2D_2}\,,
\end{equation*}
\begin{equation*}
\langle c_{1133}\rangle=
\overline{\left(\frac{1}{c_{3333}}\right)}^{\,\,-1}\,
\overline{\left(\frac{c_{1133}}{c_{3333}}\right)}\,,\!
\quad
\langle c_{2233}\rangle=
\overline{\left(\frac{1}{c_{3333}}\right)}^{\,\,-1}\,
\overline{\left(\frac{c_{2233}}{c_{3333}}\right)}\,,\!
\quad
\langle c_{3312}\rangle=
\overline{\left(\frac{1}{c_{3333}}\right)}^{\,\,-1}\,
\overline{\left(\frac{c_{3312}}{c_{3333}}\right)}\,,
\end{equation*}
where angle brackets denote the equivalent-medium elasticity parameters.

To calculate the remaining equivalent elasticity parameters from equation~(\ref{eq:H}), we insert components~(\ref{eq:mono}) into expression~(\ref{eq:HJF}) to write
\begin{equation*}
J=
\begin{pmatrix}
c_{1111}&c_{1122}&\sqrt{2}c_{1112}\\[8pt]
c_{1122}&c_{2222}&\sqrt{2}c_{2212}\\[8pt]
\sqrt{2}c_{1112}&\sqrt{2}c_{2212}&2c_{1212}
\end{pmatrix}
\,,
\quad
\overline J=
\begin{pmatrix}
\overline{c_{1111}}&\overline{c_{1122}}&\sqrt{2}\overline{c_{1112}}\\[8pt]
\overline{c_{1122}}&\overline{c_{2222}}&\sqrt{2}\overline{c_{2212}}\\[8pt]
\sqrt{2}\overline{c_{1112}}&\sqrt{2}\overline{c_{2212}}&2\overline{c_{1212}}
\end{pmatrix}
\,,
\end{equation*}
\begin{equation*}
K=
\begin{pmatrix}
c_{1133}&0&0\\[8pt]
c_{2233}&0&0\\[8pt]
\sqrt{2}c_{3312}&0&0
\end{pmatrix}
\,,
\quad
KM^{-1}=
\begin{pmatrix}
\dfrac{c_{1133}}{c_{3333}}&0&0\\[16pt]
\dfrac{c_{2233}}{c_{3333}}&0&0\\[16pt]
\sqrt{2}\,\dfrac{c_{3312}}{c_{3333}}&0&0
\end{pmatrix}
\,,
\quad
\overline{KM^{-1}}=
\begin{pmatrix}
\overline{\left(\dfrac{c_{1133}}{c_{3333}}\right)}&0&0\\[16pt]
\overline{\left(\dfrac{c_{2233}}{c_{3333}}\right)}&0&0\\[16pt]
\sqrt{2}\,\overline{\left(\dfrac{c_{3312}}{c_{3333}}\right)}&0&0
\end{pmatrix}
\,,
\end{equation*}
\begin{equation*}
KM^{-1}B=
\begin{pmatrix}
\dfrac{c_{1133}^2}{c_{3333}}&\dfrac{c_{1133}c_{2233}}{c_{3333}}&\sqrt{2}\,\dfrac{c_{3312}c_{1133}}{c_{3333}}\\[16pt]
\dfrac{c_{1133}c_{2233}}{c_{3333}}&\dfrac{c_{2233}^2}{c_{3333}}&\sqrt{2}\,\dfrac{c_{3312}c_{2233}}{c_{3333}}\\[16pt]
\sqrt{2}\,\dfrac{c_{3312}c_{1133}}{c_{3333}}&\sqrt{2}\,\dfrac{c_{3312}c_{2233}}{c_{3333}}&2\,\dfrac{c_{3312}^2}{c_{3333}}
\end{pmatrix}
\,,
\end{equation*}
\vspace{0.1in}
\begin{equation*}
\overline{KM^{-1}B}=
\begin{pmatrix}
\overline{\left(\dfrac{c_{1133}^2}{c_{3333}}\right)}&\overline{\left(\dfrac{c_{1133}c_{2233}}{c_{3333}}\right)}&\sqrt{2}\,\overline{\left(\dfrac{c_{3312}c_{1133}}{c_{3333}}\right)}\\[16pt]
\overline{\left(\dfrac{c_{1133}c_{2233}}{c_{3333}}\right)}&\overline{\left(\dfrac{c_{2233}^2}{c_{3333}}\right)}&\sqrt{2}\,\overline{\left(\dfrac{c_{3312}c_{2233}}{c_{3333}}\right)}\\[16pt]
\sqrt{2}\,\overline{\left(\dfrac{c_{3312}c_{1133}}{c_{3333}}\right)}&\sqrt{2}\,\overline{\left(\dfrac{c_{3312}c_{2233}}{c_{3333}}\right)}&2\,\overline{\left(\dfrac{c_{3312}^2}{c_{3333}}\right)}
\end{pmatrix}
\,,
\end{equation*}
\begin{equation*}
\overline{KM^{-1}}\,\overline{(M^{-1})}^{\,\,-1}=
\begin{pmatrix}
\overline{\left(\dfrac{c_{1133}}{c_{3333}}\right)}\,\overline{\left(\dfrac{1}{c_{3333}}\right)}^{\,\,-1}&0&0\\[16pt]
\overline{\left(\dfrac{c_{2233}}{c_{3333}}\right)}\,\overline{\left(\dfrac{1}{c_{3333}}\right)}^{\,\,-1}&0&0\\[16pt]
\sqrt{2}\,\overline{\left(\dfrac{c_{3312}}{c_{3333}}\right)}\,\overline{\left(\dfrac{1}{c_{3333}}\right)}^{\,\,-1}&0&0
\end{pmatrix}
\,,
\end{equation*}
\vspace{0.1in}
\begin{align*}
&\overline{KM^{-1}}\,\overline{(M^{-1})}^{\,\,-1}\overline{M^{-1}B}\\[4pt]
&{\footnotesize\!=\begin{pmatrix}
\overline{\left(\dfrac{1}{c_{3333}}\right)}^{\,\,-1}
\overline{\left(\dfrac{c_{1133}}{c_{3333}}\right)}^2&
\overline{\left(\dfrac{1}{c_{3333}}\right)}^{\,\,-1}
\overline{\left(\dfrac{c_{1133}}{c_{3333}}\right)}\,
\overline{\left(\dfrac{c_{2233}}{c_{3333}}\right)}
&
\sqrt{2}\,
\overline{\left(\dfrac{1}{c_{3333}}\right)}^{\,\,-1}\,
\overline{\left(\dfrac{c_{1133}}{c_{3333}}\right)}\,
\overline{\left(\dfrac{c_{3312}}{c_{3333}}\right)}\,
\\[16pt]
\overline{\left(\dfrac{1}{c_{3333}}\right)}^{\,\,-1}
\overline{\left(\dfrac{c_{1133}}{c_{3333}}\right)}\,
\overline{\left(\dfrac{c_{2233}}{c_{3333}}\right)}
&
\overline{\left(\dfrac{1}{c_{3333}}\right)}^{\,\,-1}
\overline{\left(\dfrac{c_{2233}}{c_{3333}}\right)}^2&
\sqrt{2}\,
\overline{\left(\dfrac{1}{c_{3333}}\right)}^{\,\,-1}
\overline{\left(\dfrac{c_{2233}}{c_{3333}}\right)}\,
\overline{\left(\dfrac{c_{3312}}{c_{3333}}\right)}
\\[16pt]
\sqrt{2}\,
\overline{\left(\dfrac{1}{c_{3333}}\right)}^{\,\,-1}
\overline{\left(\dfrac{c_{1133}}{c_{3333}}\right)}\,
\overline{\left(\dfrac{c_{3312}}{c_{3333}}\right)}
&\sqrt{2}\,
\overline{\left(\dfrac{1}{c_{3333}}\right)}^{\,\,-1}
\overline{\left(\dfrac{c_{2233}}{c_{3333}}\right)}\,
\overline{\left(\dfrac{c_{3312}}{c_{3333}}\right)}
&
2\,
\overline{\left(\dfrac{1}{c_{3333}}\right)}^{\,\,-1}
\overline{\left(\dfrac{c_{3312}}{c_{3333}}\right)}^2
\end{pmatrix}}
\,.
\end{align*}
Then, if we write equation~(\ref{eq:H}) as
\begin{equation*}
\begin{pmatrix}
\overline{\sigma_{11}}\\[8pt]
\overline{\sigma_{22}}\\[8pt]
\sqrt{2}\overline{\sigma_{12}}
\end{pmatrix}
=
\left[\overline J-\overline{KM^{-1}B}+\overline{KM^{-1}}\,\overline{(M^{-1})}^{\,\,-1}\overline{(M^{-1}B)}
\right]\!
\begin{pmatrix}
\overline{\varepsilon_{11}}\\[8pt]
\overline{\varepsilon_{22}}\\[8pt]
\sqrt{2}\overline{\varepsilon_{12}}
\end{pmatrix}
+
\overline{KM^{-1}}\,\overline{(M^{-1})}^{\,\,-1}\!
\begin{pmatrix}
\overline{\varepsilon_{33}}\\[8pt]
\sqrt{2}\overline{\varepsilon_{23}}\\[8pt]
\sqrt{2}\overline{\varepsilon_{13}}
\end{pmatrix}
\end{equation*}
and compare it to equation~(\ref{eq:Chapman}), we obtain
\begin{equation*}
\langle c_{1133}\rangle=
\overline{\left(\frac{1}{c_{3333}}\right)}^{\,\,-1}
\overline{\left(\frac{c_{1133}}{c_{3333}}\right)}\,,\!
\quad
\langle c_{2233}\rangle=
\overline{\left(\frac{1}{c_{3333}}\right)}^{\,\,-1}
\overline{\left(\frac{c_{2233}}{c_{3333}}\right)}\,,\!
\quad
\langle c_{3312}\rangle=
\overline{\left(\frac{1}{c_{3333}}\right)}^{\,\,-1}
\overline{\left(\frac{c_{3312}}{c_{3333}}\right)}\,,
\end{equation*}
as before, and
\begin{equation*}
\langle c_{1111}\rangle=
\overline{c_{1111}}-\overline{\left(\frac{c_{1133}^2}{c_{3333}}\right)}+
\overline{\left(\frac{1}{c_{3333}}\right)}^{\,\,-1}
\overline{\left(\frac{c_{1133}}{c_{3333}}\right)}^{\,2}\,,
\end{equation*}
\begin{equation*}
\langle c_{1122}\rangle=
\overline{c_{1122}}-\overline{\left(\frac{c_{1133}\,c_{2233}}{c_{3333}}\right)}+
\overline{\left(\frac{1}{c_{3333}}\right)}^{\,\,-1}
\overline{\left(\frac{c_{1133}}{c_{3333}}\right)}\,\,
\overline{\left(\frac{c_{2233}}{c_{3333}}\right)}\,,
\end{equation*}
\begin{equation*}
\langle c_{2222}\rangle=
\overline{c_{2222}}-\overline{\left(\frac{c_{2233}^2}{c_{3333}}\right)}+
\overline{\left(\frac{1}{c_{3333}}\right)}^{\,\,-1}
\overline{\left(\frac{c_{2233}}{c_{3333}}\right)}^{\,2}\,,
\end{equation*}
\begin{equation*}
\langle c_{1212}\rangle=
\overline{c_{1212}}-\overline{\left(\frac{c_{3312}^2}{c_{3333}}\right)}+
\overline{\left(\frac{1}{c_{3333}}\right)}^{\,\,-1}
\overline{\left(\frac{c_{3312}}{c_{3333}}\right)}^{\,2}\,,
\end{equation*}
\begin{equation*}
\langle c_{1112}\rangle=
\overline{c_{1112}}-\overline{\left(\frac{c_{3312}\,c_{1133}}{c_{3333}}\right)}+
\overline{\left(\frac{1}{c_{3333}}\right)}^{\,\,-1}
\overline{\left(\frac{c_{1133}}{c_{3333}}\right)}\,\,
\overline{\left(\frac{c_{3312}}{c_{3333}}\right)}\,,
\end{equation*}
\begin{equation*}
\langle c_{2212}\rangle=
\overline{c_{2212}}-\overline{\left(\frac{c_{3312}\,c_{2233}}{c_{3333}}\right)}+
\overline{\left(\frac{1}{c_{3333}}\right)}^{\,\,-1}
\overline{\left(\frac{c_{2233}}{c_{3333}}\right)}\,\,
\overline{\left(\frac{c_{3312}}{c_{3333}}\right)}\,.
\end{equation*}
The other equivalent-medium elasticity parameters are zero. 
Thus, we have thirteen linearly independent parameters in the form of matrix~(\ref{eq:mono}).
Hence, the equivalent medium exhibits the same symmetry as the individual layers.
Also if we set $c_{1112}$, $c_{2212}$, $c_{3312}$ and $c_{2313}$ to zero the
results of this section reduce to the results of the next section.    The results of
this section differ from the results of Kumar (2013, Appendix~B) but that is
because that paper uses a vertical ($x_1$-$x_3$) symmetry plane whereas
we use a horizontal ($x_1$-$x_2$) symmetry plane, which---since it is parallel
to the layering---produces simpler results.
\subsection{Orthotropic symmetry}
Continuing the reduction of expressions derived for general anisotropy to higher material symmetries, let us consider the case of orthotropic layers.
The components of an orthotropic tensor can be written as
\begin{equation}
C^{\rm ortho}=
\label{eq:ortho}
\begin{pmatrix}
c_{1111}&c_{1122}&c_{1133}&0&0&0\\
c_{1122}&c_{2222}&c_{2233}&0&0&0\\
c_{1133}&c_{2233}&c_{3333}&0&0&0\\
0&0&0&2c_{2323}&0&0\\
0&0&0&0&2c_{1313}&0\\
0&0&0&0&0&2c_{1212}
\end{pmatrix}\,;
\end{equation}
this equation corresponds to the coordinate system whose axes are
normal to the symmetry planes.

The equivalent medium elasticity parameters can be derived in a similar manner
as in section~\ref{sec:mono} or by setting $c_{1112}$, $c_{2212}$, $c_{3312}$ and $c_{2313}$ to zero in the results of section~\ref{sec:mono} .  In either case we obtain
\begin{equation*}
\langle c_{3333}\rangle=\overline{\left(\frac{1}{c_{3333}}\right)}^{\,\,-1}\,,
\qquad
\langle c_{2323}\rangle=\overline{\left(\frac{1}{c_{2323}}\right)}^{\,\,-1}\,,
\qquad
\langle c_{1313}\rangle=\overline{\left(\frac{1}{c_{1313}}\right)}^{\,\,-1}\,,
\end{equation*}
\begin{equation*}
\langle c_{1133}\rangle=
\overline{\left(\frac{1}{c_{3333}}\right)}^{\,\,-1}
\overline{\left(\frac{c_{1133}}{c_{3333}}\right)}\,,
\qquad
\langle c_{2233}\rangle=
\overline{\left(\frac{1}{c_{3333}}\right)}^{\,\,-1}
\overline{\left(\frac{c_{2233}}{c_{3333}}\right)}\,,
\end{equation*}
\begin{equation*}
\langle c_{1111}\rangle=
\overline{c_{1111}}-\overline{\left(\frac{c_{1133}^2}{c_{3333}}\right)}+
\overline{\left(\frac{1}{c_{3333}}\right)}^{\,\,-1}
\overline{\left(\frac{c_{1133}}{c_{3333}}\right)}^2\,,
\end{equation*}
\begin{equation*}
\langle c_{1122}\rangle=
\overline{c_{1122}}-\overline{\left(\frac{c_{1133}\,c_{2233}}{c_{3333}}\right)}+
\overline{\left(\frac{1}{c_{3333}}\right)}^{\,\,-1}
\overline{\left(\frac{c_{1133}}{c_{3333}}\right)}\,
\overline{\left(\frac{c_{2233}}{c_{3333}}\right)}\,,
\end{equation*}
\begin{equation*}
\langle c_{2222}\rangle=
\overline{c_{2222}}-\overline{\left(\frac{c_{2233}^2}{c_{3333}}\right)}+
\overline{\left(\frac{1}{c_{3333}}\right)}^{\,\,-1}
\overline{\left(\frac{c_{2233}}{c_{3333}}\right)}^2\,,\qquad
\langle c_{1212}\rangle=\overline{c_{1212}}\,.
\end{equation*}
The other equivalent-medium elasticity parameters are zero. 
Thus, we have nine linearly independent parameters in the form of matrix~(\ref{eq:ortho}).
Hence, the equivalent medium exhibits the same symmetry as the individual layers.
Subsequent reductions to transversely isotropic and isotropic layers result, respectively, in expressions~(9) and (13) of Backus (1962).

Also, the results of this section agree with the results of Tiwary (2007, expression~(5.1)) except
for the fifth equation of that expression, which contains a typo:  $C_{13}$ instead of a $C_{23}$\,. 
Tiwary (2007) references that expression to Shermergor (1977, expression~(2.4)), a book in Russian; since we do not have access to that book, we cannot ascertain whether or not that typo originates with Shermergor (1977).   The results of this section also agree with the
results of Kumar (2013, Appendix~B) and of Slawinski (2016, Exercise~4.6).
\section{Conclusions}
In this paper, using the case of the medium that is a long-wave equivalent of a stack of thin generally anisotropic layers, we examine the mathematical underpinnings of the approach of Backus (1962), whose underlying assumption remains lateral homogeneity.

Following explicit statements of assumptions and definitions, in Lemma~\ref{lem:LemStab}, we prove---within the long-wave approximation---that if the thin layers obey stability conditions then so does the equivalent medium.
Also, we show that the Backus average is allowed for any sequence of layers composed of Hookean solids.
As a part of the discussion of approximations, in the proof of Lemma~\ref{lem:LemProd}, we examine---within the Backus-average context---the approximation of the average of a product as the product of averages, and give upper bounds for their difference in Propositions~\ref{prop:One} and \ref{prop:Two}.
\section{Further work}
The subject of Backus average was examined by several researchers, among them,  Helbig and Schoenberg (1987), Schoenberg and Muir (1989), Berryman (1997), Helbig (1998, 2000), Carcione et al.\  (2012), Kumar (2013), Brisco (2014), and Danek and Slawinski (2016).
However, further venues of investigation remain open.

A following step is the error-propagation analysis, which is the effect of errors in layer parameters on the errors of the equivalent medium.
This step might be performed with perturbation techniques.
Also, using such techniques, we could examine numerically the precise validity of $\overline{fg}\approx \overline f \,\overline g$\,, which is the approximation of Lemma~\ref{lem:LemProd}.

Another numerical study could examine whether the equivalent medium for a stack of strongly anisotropic layers, whose anisotropic properties are randomly different from each other, is weakly anisotropic.
If so, we might seek---using the method proposed by Gazis et al.\ (1963) and elaborated by Danek et al.\ (2015)---an elasticity tensor of a higher symmetry that is nearest to that medium.  For
such a study, Kelvin's notation---used in this paper---is preferable, even though one could accommodate 
rotations in Voigt's notation by using the Bond (1943) transformation (e.g.,
Slawinski (2015), section~5.2).

A further possibility is an empirical examination of the obtained formul{\ae}.
This could be achieved with seismic data, where the layer properties are obtained from well-logging tools and the equivalent parameters from vertical seismic profiling.
\section*{Acknowledgments}
We wish to acknowledge discussions with George Backus, Klaus Helbig, Mikhail Kochetov and Michael
Rochester. This research was performed in the context of The Geomechanics Project
supported by Husky Energy. Also, this research was partially supported by the
Natural Sciences and Engineering Research Council of Canada, grant 238416-2013.
\section*{References}
\frenchspacing
\newcommand{\hd}{\par\noindent\hangindent=0.4in\hangafter=1}
\hd
Anderson, D.L., Elastic wave propagation in layered anisotropic media,
{\it J. Geophys. Research\/} {\bf 66}, 2953--2964, 1961.
\setlength{\parskip}{4pt}
\hd
Backus, G.E.,  Long-wave elastic anisotropy produced by horizontal layering,
{\it  J. Geophys. Res.\/} {\bf 67}, 11, 4427--4440, 1962.
\hd
Berryman, J.G., Range of the {$P$}-wave anisotropy parameter for finely layered {VTI} media,
{\it Stanford Exploration Project\/} {\bf 93}, 179--192, 1997.
\hd
Bond, W.L., The mathematics of the physical properties of crystals,
{\it Bell System Technical Journal\/} {\bf 22}, 1--72, 1943.
\hd
Brisco, C., {\it Anisotropy vs. inhomogeneity: Algorithm formulation, coding and modelling,
Honours Thesis\/}, Memorial University, 2014.
\hd
Carcione, J.M., S. Picotti, F. Cavallini and J.E. Santos, Numerical test of the
Schoenberg-Muir theory, {\it Geophysics\/} {\bf 77}, 2, C27--C35, 2012.
\hd
Chapman, C.H., {\it Waves and rays in elastic continua\/}, Cambridge University Press, 2004.
\hd
Dalton, D., and M.A. Slawinski, On Backus average for oblique incidence, {\it
ar{X}iv\/}:1601.02966v1 [physics.geo-ph], 2016.
\hd
Danek, T.,  M. Kochetov and M.A. Slawinski, Effective elasticity tensors in the context of random errors, {\it Journal of Elasticity\/} {\bf 121}(1), 55--67, 2015.
\hd
Danek, T., and M.A. Slawinski,  Backus average under random perturbations of
layered media, {\it SIAM Journal on Applied Mathematics\/}, MS\#M104317, 2016.
\hd
Gazis, D.C., I. Tadjbakhsh and R.A. Toupin, The elastic tensor of given symmetry nearest to an anisotropic elastic tensor, {\it Acta Crystallographica\/} {\bf 16}, 9, 917--922, 1963.
\hd
Haskell, N.A., Dispersion of surface waves on multilayered media, {\it Bull. Seism.
Soc. Am.\/} {\bf 43}, 17--34, 1953.
\hd
Helbig, K., Elastischen Wellen in anisotropen Medien, {\it Getlands Beitr. Geophys.\/}
{\bf 67}, 256--288, 1958.
\hd
Helbig, K., Layer-induced anisotropy: Forward relations between between constituent parameters and compound parameters,
{\it Revista Brasileira de Geof{\'\i}sica\/} {\bf 16}, 2--3, 103--114, 1998.
\hd
Helbig, K., Inversion of compound parameters to constituent parameters,
{\it Revista Brasileira de Geof{\'\i}sica\/} {\bf 18}, 2, 173--185, 2000.
\hd
Helbig, K. and M. Schoenberg, Anomalous polarization of elastic waves in transversely
isotropic media, {\it J. Acoust. Soc. Am.\/} {\bf 81}, 5, 1235--1245, 1987.
\hd
Kumar, D., Applying Backus averaging for deriving seismic anisotropy of a long-wavelength equivalent medium from well-log data, {\it  J. Geophys. Eng.\/} {\bf 10}, 1--15, 2013.
\hd
Postma, G.W., Wave propagation in a stratified medium, {\it Geophysics\/} {\bf 20},
780--806, 1955.
\hd
Riznichenko, Yu. Y., On seismic anisotropy, {\it Invest. Akad. Nauk SSSR, Ser. Geograf. i Geofiz.\/}
{\bf 13}, 518--544, 1949.
\hd
Rudzki, M.P., Parametrische {D}arstellung der elastischen {W}ellen in anisotropischen {M}edie,
{\it Bull. Acad. Cracovie\/}, 503, 1911.
\hd
Rytov, S.M., The acoustical properties of a finely layered medium, {\it Akust. Zhur.\/},
{\bf 2}, 71, 1956.   See also {\it Sov. Phys. Acoust.\/} {\bf 2}, 67, 1956.
\hd
Schoenberg, M. and F. Muir, A calculus for finely layered anisotropic media, {\it Geophysics\/}
{\bf 54}, 5, 581--589, 1989.
\hd
Shermergor, T., {\it Theory of elasticity of microinhomogeneous media\/}. (in Russian), Nauka, 1977.
\hd
Slawinski, M.A. {\it Wavefronts and rays in seismology: Answers to unasked questions\/},
World Scientific, 2016.
\hd
Slawinski, M.A., {\it Waves and rays in elastic continua\/}, World Scientific, 2015.
\hd
Tiwary, D.K., {\it Mathematical modelling and ultrasonic measurement of shale anisotropy and a 
comparison of upscaling methods from sonic to seismic\/},  Ph.D. thesis, University of Oklahoma, 2007.
\hd
Thomson, W.T., Transmission of elastic waves through a stratified solid medium, {\it J.
Appl. Phys.\/} {\bf 21}, 80--93, 1950.
\hd
White, J.E., and F.A. Angona, Elastic wave velocities in laminated media,
{\it J. Acoust. Soc. Am.\/} {\bf 27}, 310--317, 1955.
\setcounter{section}{0}
\setlength{\parskip}{0pt}
\renewcommand{\thesection}{Appendix~\Alph{section}}
\section{Average of derivatives}
\label{AppLemDer}
\begin{proof}
We begin with the definition of averaging,
\begin{equation}
\label{eq:LenTrivial}
\overline{f}(x_3):=\int\limits_{-\infty}^\infty w(\xi-x_3)f(\xi)\,{\rm d}\xi
\,.
\end{equation}
The derivatives with respect to $x_1$ and $x_2$  can be written as
\begin{align*}
  \frac{\partial\overline{f}}{\partial x_i}
  &=\frac{\partial}{\partial x_i}\int\limits_{-\infty}^\infty w(\xi-x_3)f(x_1,x_2,\xi)\,{\rm d}\xi\\
  &=\int\limits_{-\infty}^\infty w(\xi-x_3)\frac{\partial f(x_1,x_2,\xi)}{\partial x_i}\,{\rm d}\xi
  =:\overline{\frac{\partial f}{\partial x_i}}
  \,,
 \qquad i=1,2
 \,,
\end{align*}
where the last equality is the statement of definition~(\ref{eq:LenTrivial}), as required.
For the derivatives with respect to $x_3$\,, we need to verify that
\begin{equation}
\label{eq:MishaNotation}
\frac{\partial}{\partial x_3}\int\limits_{-\infty}^\infty w(\xi-x_3)f(x_1,x_2,\xi)\,{\rm d}\xi
 =
 \int\limits_{-\infty}^\infty w(\xi-x_3)\frac{\partial f(x_1,x_2,\xi)}{\partial\xi}\,{\rm d}\xi
  \,.
\end{equation}
Applying integration by parts, we write the right-hand side as
\begin{equation*}
\left.w(\xi-x_3)f(x_1,x_2,\xi)\right|_{-\infty}^\infty-\int\limits_{-\infty}^\infty w'(\xi-x_3)\,f(x_1,x_2,\xi)\,{\rm d}\xi
 \,,
\end{equation*}
where $w$ is a function of a single variable.
Since
\begin{equation*}
\lim_{x_3\rightarrow\pm\infty}w(x_3)=0
\,,
\end{equation*}
the product of $w$ and $f$ vanishes at $\pm\infty$\,, and we are left with
\begin{equation*}
-\int\limits_{-\infty}^\infty w'(\xi-x_3)\,f(x_1,x_2,\xi)\,{\rm d}\xi
 \,.
\end{equation*}
Let us consider the left-hand side of expression~(\ref{eq:MishaNotation}).
Since only $w$ is a function of $x_3$\,, we can interchange the operations of integration and differentiation to write
\begin{equation*}
-\int\limits_{-\infty}^\infty w'(\xi-x_3)\,f(x_1,x_2,\xi)\,{\rm d}\xi
 \,;
\end{equation*}
the negative sign arises from the chain rule,
\begin{equation*}
\frac{\partial w(\xi -x_3)}{\partial x_3}
=w'(\xi -x_3)\frac{\partial(\xi -x_3)}{\partial x_3}
=-w'(\xi-x_3)
\,.
\end{equation*}
Thus, both sides of expression~(\ref{eq:MishaNotation}) are equal to one another, as required.
In other words,
\begin{equation*}
\frac{\partial\,\overline f}{\partial x_3}
=
 \overline{\frac{\partial f}{\partial x_3}}
 \,,
\end{equation*}
which completes the proof.
\end{proof}
\section{Stability of equivalent medium}
\label{AppLemStab}
\begin{proof}
The stability of layers means that their deformation requires work.
Mathematically, it means that, for each layer,
\begin{equation*}
 W=\frac{1}{2}
 \sigma\cdot\varepsilon
 >0
\,,
\end{equation*}
where $W$ stands for work, and $\sigma$ and $\varepsilon$ denote the stress and strain tensors, respectively, which are expressed as columns in equation~(\ref{eq:Chapman}): $\sigma=C\varepsilon$\,.
As an aside, we can say that, herein, $W>0$ is equivalent to the positive definiteness of $C$\,, for each layer.

Performing the average of $W$ over all layers and using---in the scalar product---the fact that the average of a sum is the sum of averages, we write
\begin{equation*}
\overline{W}=\frac{1}{2}
 \overline{\sigma\cdot\varepsilon}
 >0
\,.
\end{equation*}
Thus, $W>0\implies\overline{W}>0$\,.

Let us proceed to show that this implication---in turn---entails the stability of the equivalent medium, which is tantamount to the positive definiteness of $\langle C\,\rangle$\,.

Following Lemma~\ref{lem:LemProd}---if one of two functions is nearly constant---we can approximate the average of their product by the product of their averages,
\begin{equation}
\label{eq:MishaNov}
\overline{W}=\frac{1}{2}
 \overline{\sigma}\cdot\overline{\varepsilon}
 >0
\,.
\end{equation}
Herein, we use the property stated in Section~\ref{sub:Approx} that $\sigma_{i3}$\,, where $i\in\{1,2,3\}$\,, are constant, and $\varepsilon_{11}$\,, $\varepsilon_{12}$ and $\varepsilon_{22}$ vary slowly, along the $x_3$-axis,
together with Lemma~\ref{lem:LemProd}, which can be invoked due to the fact that each product in expression~(\ref{eq:MishaNov}) is such that one function is nearly constant and the other possibly varies more rapidly.

By definition of Hooke's law, $\overline{\sigma}:=\langle C\,\rangle\,\overline{\varepsilon}$\,, expression~(\ref{eq:MishaNov}) can be written as
\begin{equation*}
\frac{1}{2}\,
 \left(\,\langle C\,\rangle\,\overline{\varepsilon}\,\right)\cdot\overline{\varepsilon}
 >0
\,,\qquad\forall\,\,\overline{\varepsilon}\neq 0
\,,
\end{equation*}
which means that $\langle C\,\rangle$ is positive-definite, and which---in view of this derivation---proves that the equivalent medium inherits the stability of individual layers.
\end{proof}
\section{Approximation of product}
\label{AppLemApp}
For a fixed $x_3$\,, we may set $W(\zeta):=w(\zeta-x_3)$\,.
Then, $W\geqslant 0$ and
$\int_{-\infty}^\infty W(\zeta)\,{\rm d}\zeta =1$\,.
With this notation, equation~(\ref{eq:BackusOne}) becomes
\[
\overline f:=\int\limits_{-\infty}^\infty f(x)\,W(x)\,{\rm d}x
\,.
\]
Similarly,
\begin{equation*}
\overline g:=\int\limits_{-\infty}^\infty g(x)\,W(x)\,{\rm d}x
\qquad
{\rm and}
\qquad
\overline{fg}:=\int\limits_{-\infty}^\infty f(x)\,g(x)\,W(x)\,{\rm d}x\,.
\end{equation*}
\begin{prop}
\label{prop:One}
Suppose that the first derivatives of $f$ and $g$ are uniformly bounded; that is, both
\begin{equation*}
\|f'\|_{\infty}:=\sup_{-\infty<x<\infty}|f'(x)| 
\qquad
and
\qquad
\|g'\|_{\infty}:=\sup_{-\infty<x<\infty}|g'(x)|
\end{equation*}
are finite.
Then, we have 
\[|{\overline{fg}}-{\overline f}{\overline g}|\leqslant 2\,(\ell')^2\,\|f'\|_\infty\|g'\|_\infty\,.\]
\end{prop}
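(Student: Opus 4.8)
The plan is to recognise the left-hand side as the covariance of $f$ and $g$ with respect to the probability density $W$, and to control it by centering both functions at the mean of $W$. I would set $\mu:=x_3$, which by the defining properties of $w$ is exactly the mean of $W$, so that $\int_{-\infty}^\infty(x-\mu)\,W(x)\,{\rm d}x=0$ and $\int_{-\infty}^\infty(x-\mu)^2\,W(x)\,{\rm d}x=(\ell')^2$. Since $f(\mu)$ and $g(\mu)$ are constants and $\int_{-\infty}^\infty W=1$, a direct expansion yields the algebraic identity
\[
\overline{fg}-\overline f\,\overline g=\overline{\bigl(f-f(\mu)\bigr)\bigl(g-g(\mu)\bigr)}-\bigl(\overline f-f(\mu)\bigr)\bigl(\overline g-g(\mu)\bigr)\,,
\]
which is the key structural step: it replaces $f$ and $g$ by their deviations from their common value at the mean, where the boundedness of the derivatives can be exploited.

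First I would bound the two factors $f-f(\mu)$ and $g-g(\mu)$ pointwise. By the mean value theorem, $|f(x)-f(\mu)|\leqslant\|f'\|_\infty|x-\mu|$ and likewise for $g$. Substituting into the first term and using the second-moment property gives
\[
\bigl|\overline{\bigl(f-f(\mu)\bigr)\bigl(g-g(\mu)\bigr)}\bigr|\leqslant\|f'\|_\infty\|g'\|_\infty\int_{-\infty}^\infty(x-\mu)^2\,W(x)\,{\rm d}x=(\ell')^2\,\|f'\|_\infty\|g'\|_\infty\,.
\]
For the second term I would write $\overline f-f(\mu)=\overline{f-f(\mu)}$ and estimate $|\overline f-f(\mu)|\leqslant\|f'\|_\infty\int_{-\infty}^\infty|x-\mu|\,W(x)\,{\rm d}x\leqslant\|f'\|_\infty\,\ell'$, where the last inequality is Cauchy--Schwarz (equivalently Jensen) applied to $|x-\mu|$ against $W$, using $\int|x-\mu|\,W\leqslant\bigl(\int(x-\mu)^2\,W\bigr)^{1/2}=\ell'$. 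The same bound holds for $g$, so the product is at most $(\ell')^2\,\|f'\|_\infty\|g'\|_\infty$.

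Adding the two contributions gives $|\overline{fg}-\overline f\,\overline g|\leqslant2\,(\ell')^2\,\|f'\|_\infty\|g'\|_\infty$, as claimed. I do not expect a serious obstacle; the only real decisions are choosing the centering point to be the mean $\mu=x_3$ (so that the first-moment term vanishes and the variance appears cleanly) and handling the second, lower-order term by Cauchy--Schwarz. I would also remark that the symmetric covariance identity $\overline{fg}-\overline f\,\overline g=\tfrac12\iint\bigl(f(x)-f(y)\bigr)\bigl(g(x)-g(y)\bigr)W(x)W(y)\,{\rm d}x\,{\rm d}y$ offers an alternative route and in fact yields the sharper constant $(\ell')^2$; the factor $2$ in the statement is simply the price of the more elementary two-term decomposition above. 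A minor point I would check is the integrability of $\overline f$ and $\overline g$, which follows since each function grows at most linearly while $W$ has a finite first moment.
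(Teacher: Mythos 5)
Your proof is correct, and it takes a genuinely different route from the paper's. The paper centers $f$ and $g$ at their \emph{averages}, establishing the covariance identity
\[
\overline{fg}-\overline f\,\overline g=\int\limits_{-\infty}^\infty\bigl(f(x)-\overline f\bigr)\bigl(g(x)-\overline g\bigr)W(x)\,{\rm d}x\,,
\]
then bounds each centered factor in $L^2(W)$ (via $|f(x)-\overline f|\leqslant\|f'\|_\infty\int|x-y|\,W(y)\,{\rm d}y$ followed by Cauchy--Schwarz), and finally applies Cauchy--Schwarz once more to the covariance integral itself; there the factor $2$ emerges from the quantity $2\int x^2W-2\left(\int xW\right)^2=2(\ell')^2$. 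You instead center both functions at their \emph{value at the mean point} $\mu=x_3$ of $W$, which trades the single covariance integral for the two-term identity $\overline{fg}-\overline f\,\overline g=\overline{\bigl(f-f(\mu)\bigr)\bigl(g-g(\mu)\bigr)}-\bigl(\overline f-f(\mu)\bigr)\bigl(\overline g-g(\mu)\bigr)$; each term is then bounded by $(\ell')^2\,\|f'\|_\infty\|g'\|_\infty$ using only the mean-value theorem and the moments of $W$, and your factor $2$ comes from summing the two terms. Your route is the more elementary one: it avoids the $L^2$ estimates entirely and needs Cauchy--Schwarz only once, to control the first absolute moment $\int|x-\mu|\,W\leqslant\ell'$, whereas the paper's centering at the averages is the more intrinsic probabilistic device (no distinguished point is needed) and produces, as a by-product, bounds on the variances of $f$ and $g$ under $W$. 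Your closing remark is also correct and worth emphasizing: the symmetrized identity $\overline{fg}-\overline f\,\overline g=\tfrac12\iint\bigl(f(x)-f(y)\bigr)\bigl(g(x)-g(y)\bigr)W(x)\,W(y)\,{\rm d}x\,{\rm d}y$ yields the sharper constant $(\ell')^2$, so the constant $2$ appearing in the Proposition (and delivered by both your argument and the paper's) is not optimal.
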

\begin{proof}
We may calculate
\begin{align*}
\int\limits_{-\infty}^\infty &\left(f(x)-\overline f\right)\left(g(x)-\overline g\right)W(x)\,{\rm d}x\\
=& \int\limits_{-\infty}^\infty f(x)\,g(x)\,W(x)\,{\rm d}x-\overline f\int\limits_{-\infty}^\infty g(x)\,W(x)\,{\rm d}x
-\overline g\int\limits_{-\infty}^\infty f(x)\,W(x)\,{\rm d}x+\int\limits_{-\infty}^\infty \overline f\,\overline g\,W(x)\,{\rm d}x\\
=& \,\overline{fg}-\overline{f}\int\limits_{-\infty}^\infty g(x)\,W(x)\,{\rm d}x
-\overline g\int\limits_{-\infty}^\infty f(x)\,W(x)\,{\rm d}x+\overline f\,\overline g
\\=&\,\overline{fg}-\overline f\, \overline g-\overline g\,\overline f+\overline f \,\overline g=
\overline{fg}-\overline f\, \overline g
\,;
\end{align*}
that is,
\begin{equation}
\overline{fg}-\overline f\, \overline g
=
\int\limits_{-\infty}^\infty\left(f(x)-\overline f\right)\left(g(x)-\overline g\right)W(x)\,{\rm d}x\,.
\label{eq:Len}
\end{equation}
Now,
\[
f(x)-\overline f=f(x)-\int\limits_{-\infty}^\infty f(y)W(y)\,{\rm d}y
=\int\limits_{-\infty}^\infty\left(f(x)-f(y)\right)W(y)\,{\rm d}y
\,,
\]
so that
\begin{equation}\label{eq:Len2}
|f(x)-\overline f|\leqslant \|f'\|_\infty\int\limits_{-\infty}^\infty |x-y|\,W(y)\,{\rm d}y
\,,
\end{equation}
and hence, by the Cauchy-Schwartz inequality,
\[|f(x)-\overline f|^2\leqslant  \|f'\|_\infty^2 \int\limits_{-\infty}^\infty |x-y|^2\,W(y)\,{\rm d}y
\, \int\limits_{-\infty}^\infty 1^2\,W(y)\,{\rm d}y= \|f'\|_\infty^2 \int\limits_{-\infty}^\infty |x-y|^2\,W(y)\,{\rm d}y\,.\]
Thus,
\begin{align*}
 \int\limits_{-\infty}^\infty |f(x)-\overline f|^2\,W(x)\,{\rm d}x&\leqslant 
 \|f'\|_\infty^2 \int\limits_{-\infty}^\infty \int\limits_{-\infty}^\infty |x-y|^2\,W(x)\,W(y)\,{\rm d}x\,{\rm d}y\\
  &= \|f'\|_\infty^2 \int\limits_{-\infty}^\infty \int\limits_{-\infty}^\infty (x^2-2xy+y^2)\,W(x)\,W(y)\,{\rm d}x\,{\rm d}y\\
  &= \|f'\|_\infty^2 \left(2 \int\limits_{-\infty}^\infty x^2\,W(x)\,{\rm d}x-2\left( \int\limits_{-\infty}^\infty
  x\,W(x)\,{\rm d}x\right)^{\!\!\!2\,\,}\right).
\end{align*}
It follows, by the Cauchy-Schwartz inequality applied to equation~\eqref{eq:Len}, that
\begin{align*}
|\overline{fg}-\overline f\, \overline g|^2&\leqslant 
 \int\limits_{-\infty}^\infty |f(x)-\overline f|^2\,W(x)\,{\rm d}x \,
  \int\limits_{-\infty}^\infty |g(x)-\overline g|^2\,W(x)\,{\rm d}x\\
  &\leqslant  \|f'\|_\infty^2  \|g'\|_\infty^2 \left(2 \int\limits_{-\infty}^\infty x^2\,W(x)\,{\rm d}x-2\left( \int\limits_{-\infty}^\infty
  x\,W(x)\,{\rm d}x\right)^{\!\!\!2\,\,}\right)^{\!\!\!2}
  \,.
\end{align*}
Note that
\[ \int\limits_{-\infty}^\infty x\,W(x)dx= \int\limits_{-\infty}^\infty x\,w(x-x_3)\,{\rm d}x
= \int\limits_{-\infty}^\infty (x+x_3)\,w(x)\,{\rm d}x=x_3\,,\]
using the defining properties of $w(\zeta).$ Similarly
\[ \int\limits_{-\infty}^\infty x^2\,W(x)\,{\rm d}x= \int\limits_{-\infty}^\infty x^2\,w(x-x_3)\,{\rm d}x=\int\limits_{-\infty}^\infty (x+x_3)^2\,w(x)\,{\rm d}x=(\ell')^2+x_3^2\,.\]
Consequently,
\[ 2 \int\limits_{-\infty}^\infty x^2\,W(x)\,{\rm d}x-2\left( \int\limits_{-\infty}^\infty
  x\,W(x)\,{\rm d}x\right)^{\!\!\!2}=2((\ell')^2+x_3^2-x_3^2)=2(\ell')^2\]
 and we have
 \[|\overline{fg}-\overline f\, \overline g|\le  2\,(\ell')^2\,\|f'\|_\infty\|g'\|_\infty\,,\]
 as claimed.
\end{proof}
\noindent Hence, if $f$ and $g$ are nearly constant, which means that $\|f'\|_\infty$ and $\|g'\|_\infty$
are small, then $\overline{fg}\approx \overline f \overline g$\,.
\begin{corollary}
Since the error estimate involves the product of the norms of the derivatives, it follows that if
one of them is small enough and the other is not excessively large, then their product can
be small enough for the approximation, $\overline{fg}\approx \overline f \,\overline g$\,, to hold.	
\end{corollary}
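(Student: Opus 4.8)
The plan is to read the corollary directly off the quantitative bound established in Proposition~\ref{prop:One}, since the statement merely re-interprets that inequality under a mild splitting of the hypotheses. First I would recall the estimate
\[
|\overline{fg}-\overline f\,\overline g|\leqslant 2\,(\ell')^2\,\|f'\|_\infty\|g'\|_\infty\,,
\]
and observe that, up to the fixed constant $2(\ell')^2$, its right-hand side is exactly the \emph{product} of the two factors $\|f'\|_\infty$ and $\|g'\|_\infty$. The entire content of the corollary is then the elementary remark that a product is small whenever one factor is small and the other is controlled.

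To make this precise, I would fix a tolerance $\varepsilon>0$ and suppose that the less-varying function, say $f$, satisfies $\|f'\|_\infty\leqslant\delta$, while the other satisfies a uniform bound $\|g'\|_\infty\leqslant K<\infty$; this latter hypothesis is the quantitative meaning of ``not excessively large.'' The bound of Proposition~\ref{prop:One} then gives
\[
|\overline{fg}-\overline f\,\overline g|\leqslant 2\,(\ell')^2\,\delta\,K\,,
\]
and choosing $\delta\leqslant \varepsilon/\bigl(2(\ell')^2 K\bigr)$ forces the right-hand side below $\varepsilon$. Since $\varepsilon$ is arbitrary, this establishes $\overline{fg}\approx\overline f\,\overline g$ in the stated sense, with the quality of the approximation governed entirely by the size of the product $\delta K$ rather than by either factor individually.

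There is no genuine obstacle here: the corollary is an immediate specialization of Proposition~\ref{prop:One}, and the only point requiring care is the verbal translation of ``small enough'' and ``not excessively large'' into the quantitative inequality above. I would also emphasize that neither derivative need be small on its own; by the symmetry of the bound in $f$ and $g$, one could equally demand $\|g'\|_\infty$ small and $\|f'\|_\infty$ merely bounded, so the roles of the two functions are interchangeable. This symmetry is exactly what makes the result applicable in the setting of Lemma~\ref{lem:LemProd}, where one factor (such as a stress component $\sigma_{i3}$ or a slowly varying strain) is nearly constant while the other may fluctuate from layer to layer.
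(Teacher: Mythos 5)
Your proposal is correct and follows essentially the same route as the paper: the paper offers no separate argument for this corollary precisely because it is an immediate reading of the bound $|\overline{fg}-\overline f\,\overline g|\leqslant 2\,(\ell')^2\,\|f'\|_\infty\|g'\|_\infty$ from Proposition~\ref{prop:One}, which is exactly what you formalize with your $\varepsilon$--$\delta$ phrasing. Your added remarks on the symmetry of the roles of $f$ and $g$ and the connection to Lemma~\ref{lem:LemProd} are consistent with the paper's intent and require no further justification.
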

\noindent The exact accuracy of this property will be examined further by numerical methods in a future publication.

If $g(x)\geqslant 0$\,, we can say more, even if $g(x)$ is wildly varying.
If $f$ is continuous and $g(x)\geqslant 0$\,, then, by the Mean-value Theorem for Integrals,
\[
\overline{fg}=\int\limits_{-\infty}^\infty f(x)\,g(x)\,W(x)\,{\rm d}x=f(c)
\int\limits_{-\infty}^\infty g(x)\,W(x)\,{\rm d}x=f(c)\,\overline g
\,,
\]
for some $c$\,.  Hence,
\[
\overline{fg}-\overline f\,\overline g=f(c)\,\overline g-\overline f\,\overline g=
(f(c)-\overline f)\,\overline g\,.
\]
This implies that
\[
|\overline{fg}-\overline f\,\overline g|\leqslant|f(c)-\overline f|\,\overline g
\leqslant \|f'\|_\infty\left(\,\int\limits_{-\infty}^\infty |x-y|\,W(y)\,{\rm d}y\right)\,\overline g
\,.
\]
Hence, even for $g$ wildly varying---as long as $\overline g$ is not too big in relation to $\|f'\|_\infty$\,---it is still the case that the average of the product
is close to the product of the averages.
A bound on ${\int_{-\infty}^\infty |x-y|\,W(y)\,{\rm d}y}$ would depend on the weight function,~$w$\,, used.

An alternative estimate is provided by the following proposition.
\begin{prop}
\label{prop:Two}
Suppose that $m:=\inf f(x)>-\infty$ and $M:=\sup\limits_{-\infty<x<\infty}f(x)<\infty$ and\\${\sup\limits_{-\infty<x<\infty}|g(x)|<\infty}$\,.
Then,
\begin{equation*}
|\overline{fg}-\overline f\,\overline g|\leqslant\left(\sup\limits_{-\infty<x<\infty}|g(x)|\right)(M-m)\,.
\end{equation*}
\end{prop}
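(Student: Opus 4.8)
The plan is to reuse the exact identity already established in equation~(\ref{eq:Len}), but to keep it in its asymmetric form so that the supremum bound on $g$ applies directly without incurring an extra factor. Starting from equation~(\ref{eq:Len}) and observing that $\int_{-\infty}^\infty\bigl(f(x)-\overline f\bigr)W(x)\,{\rm d}x=0$, because $W$ integrates to $1$, I would first rewrite
\begin{equation*}
\overline{fg}-\overline f\,\overline g=\int\limits_{-\infty}^\infty\bigl(f(x)-\overline f\bigr)\,g(x)\,W(x)\,{\rm d}x\,,
\end{equation*}
retaining the factor $g(x)$ rather than replacing it by $g(x)-\overline g$. This is the key structural choice: it lets me bound $|g(x)|$ by $\sup_x|g(x)|$ and pull that constant out of the integral, whereas the symmetric form would produce $|g(x)-\overline g|\leqslant 2\sup_x|g(x)|$ and hence only a weaker bound.

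Next I would estimate the remaining integral $\int_{-\infty}^\infty|f(x)-\overline f|\,W(x)\,{\rm d}x$. The observation here is that $\overline f$, being a $W$-weighted average of $f$, necessarily lies in the interval $[m,M]$; since $f(x)$ also lies in $[m,M]$ for every $x$, the difference satisfies $|f(x)-\overline f|\leqslant M-m$ pointwise. Therefore
\begin{equation*}
\int\limits_{-\infty}^\infty|f(x)-\overline f|\,W(x)\,{\rm d}x\leqslant(M-m)\int\limits_{-\infty}^\infty W(x)\,{\rm d}x=M-m\,.
\end{equation*}
Combining the two estimates yields $|\overline{fg}-\overline f\,\overline g|\leqslant\bigl(\sup_x|g(x)|\bigr)(M-m)$, as claimed.

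There is no serious analytic obstacle here; the content lies entirely in the two elementary observations above. The only point requiring care is the decision to use the asymmetric rather than the symmetric form of the identity~(\ref{eq:Len}), together with the remark that averaging preserves the range $[m,M]$ of $f$ --- it is this pair of facts that produces the sharp constant with $\sup_x|g(x)|$, rather than $2\sup_x|g(x)|$, and with the full oscillation $M-m$. Unlike Proposition~\ref{prop:One}, no differentiability of either function is invoked; only the boundedness of $g$ and of the range of $f$ enters, which is precisely why this estimate remains useful when $g$ is wildly varying.
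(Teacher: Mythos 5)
Your proof is correct, but it follows a genuinely different route from the paper's. You keep the covariance identity~(\ref{eq:Len}) in its asymmetric form, $\overline{fg}-\overline f\,\overline g=\int_{-\infty}^\infty\bigl(f(x)-\overline f\bigr)\,g(x)\,W(x)\,{\rm d}x$, bound $|g(x)|$ by its supremum, and then use the pointwise estimate $|f(x)-\overline f|\leqslant M-m$, which holds because both $f(x)$ and its $W$-average $\overline f$ lie in $[m,M]$. The paper instead decomposes $fg=(f-m)g+mg$, which yields the one-sided bound $|\overline{fg}-\overline f\,\overline g|\leqslant 2\bigl(\sup_x|g(x)|\bigr)\bigl(\overline f-m\bigr)$ in~(\ref{Cassinelle1}); a symmetric argument with $M$ gives~(\ref{Cassinelle2}), and averaging the two bounds produces the stated constant. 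Your argument is shorter and avoids the factor-of-two bookkeeping entirely, and it makes explicit the link to the identity underlying Proposition~\ref{prop:One}. What the paper's route buys is the pair of intermediate one-sided estimates (\ref{Cassinelle1}) and (\ref{Cassinelle2}): whichever of $\overline f-m$ and $M-\overline f$ is smaller gives a bound that is sharper than $\bigl(\sup_x|g(x)|\bigr)(M-m)$ whenever $\overline f$ is not the midpoint of $[m,M]$, information your direct approach does not record. Your side remark is also accurate: had you used the symmetric form of~(\ref{eq:Len}) with the crude bound $|g(x)-\overline g|\leqslant 2\sup_x|g(x)|$, the constant would degrade by a factor of two, so the asymmetric choice is indeed what secures the stated constant.
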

\begin{proof}
\begin{equation*}
\overline{fg}=\int\limits_{-\infty}^{\infty}f(x)\,g(x)\,W(x)\,{\rm d}x
=\int\limits_{-\infty}^{\infty}(f(x)-m)\,g(x)\,W(x)\,{\rm d}x+m\int\limits_{-\infty}^{\infty}g(x)\,W(x)\,{\rm d}x\,,
\end{equation*}
which---by the definition of the average---is
\begin{equation*}
\overline{fg}
=\int\limits_{-\infty}^{\infty}(f(x)-m)\,g(x)\,W(x)\,{\rm d}x+m\,\overline g
\,.
\end{equation*}
Hence,
\begin{align}
\label{Cassinelle1}
\nonumber|\overline{fg}-\overline f\,\overline g|
=&\left|\,\int\limits_{-\infty}^{\infty}(f(x)-m)\,g(x)\,W(x)\,{\rm d}x+\left(m-\overline f\,\right)\,\overline g\,\right|\\
\nonumber\leqslant &\left(\sup\limits_{-\infty<x<\infty}|g(x)|\right)\int\limits_{-\infty}^{\infty}(f(x)-m)\,W(x)\,{\rm d}x+\left|\,m-\overline f\,\right|\,|\overline g|\\
\nonumber=&\left(\sup\limits_{-\infty<x<\infty}|g(x)|\right)\left(\overline f-m\right)+\left(\overline f-m\right)\,|\overline g|\\
\leqslant &\,2\left(\sup\limits_{-\infty<x<\infty}|g(x)|\right)\left(\overline f-m\right)
\,.
\end{align}
Similarly,
\begin{equation}
\label{Cassinelle2}
|\overline{fg}-\overline f\,\overline g|\leqslant2\left(\sup\limits_{-\infty<x<\infty}|g(x)|\right)\left(M-\overline f\right)
\,.
\end{equation}
Taking the average of expressions~(\ref{Cassinelle1}) and (\ref{Cassinelle2}), we obtain
\begin{align*}
|\overline{fg}-\overline f\,\overline g|\leqslant &\,2\left(\sup\limits_{-\infty<x<\infty}|g(x)|\right)\frac{\left(\overline f-m\right)+\left(M-\overline f\right)}{2}\\
=&\left(\sup\limits_{-\infty<x<\infty}|g(x)|\right)(M-m)
\,,
\end{align*}
as required.
\end{proof}
\noindent Consequently, if $f(x)$ is almost constant, which means that $m\approx M$\,, then $\overline{fg}\approx\overline f\,\overline g$\,.
\end{document}